\newtheorem{definition}{\textsc{Definition}}
\newtheorem{theorem}{\textsc{Theorem}}
\newcommand{\SecVPre}{\texttt{IDCloak}\xspace}
\newcommand{\OPPRF}{OPPRF\xspace}
\newcommand{\PPML}{PPML\xspace}
\newcommand{\VPPML}{vPPML\xspace}
\newcommand{\baseline}{\texttt{iPrivJoin}\xspace}
\newcommand{\Peafowl}{\texttt{Peafowl}\xspace}
\newcommand{\share}[1]{\langle {#1} \rangle}
\definecolor{GreenB}{RGB}{0,128,0}
\begin{document}

\title{\SecVPre: A Practical Secure Multi-party Dataset Join Framework for Vertical Privacy-preserving Machine Learning}

\author{
\IEEEauthorblockN{
Shuyu Chen, 
Guopeng Lin, 
Haoyu Niu, 
Lushan Song, 
Chengxun Hong, 
Weili Han,~\IEEEmembership{Member, IEEE} 
}

\thanks{
This work has been submitted to the IEEE for possible publication. Copyright may be transferred without notice, after which this version may no longer be accessible.}
\thanks{
This work was supported in part by the
National Natural Science Foundation of China under Grant 92370120, Grant 62172100. (\textit{Corresponding author:
Weili Han.})
}
\thanks{
Shuyu Chen, Guopeng Lin, Haoyu Niu, Lushan Song, Chengxun Hong, and Weili Han are with the School of Computer Science, Fudan University, Shanghai 10246, China (e-mail: 23110240005@m.fudan.edu.cn; 17302010022@fudan.edu.cn; 23212010019@m.fudan.edu.cn; 19110240022@fudan.edu.cn;
22300240021@m.fudan.edu.cn;
wlhan@fudan.edu.cn).
}
}

\maketitle

\begin{abstract}

Vertical privacy-preserving machine learning (\VPPML) enables multiple parties to train models on their vertically distributed datasets while keeping datasets private. 
In \VPPML, it is critical to perform the secure dataset join, which aligns features corresponding to intersection IDs across datasets and forms a secret-shared and joint training dataset.
However, existing methods for this step could be impractical due to: (1) they are insecure when they expose intersection IDs; or (2) they rely on a strong trust assumption requiring a non-colluding auxiliary server; or (3) they are limited to the two-party setting. 

This paper proposes \SecVPre, the first practical secure multi-party dataset join framework for \VPPML that keeps IDs private without a non-colluding auxiliary server. 
\SecVPre consists of two protocols: (1) a circuit-based multi-party private set intersection protocol (cmPSI), which obtains secret-shared flags indicating intersection IDs via an optimized communication structure combining OKVS and OPRF; (2) a secure multi-party feature alignment protocol, which obtains the secret-shared and joint dataset using secret-shared flags, via our proposed efficient secure shuffle protocol. 
Experiments show that: 
(1) compared to the state-of-the-art secure two-party dataset join framework (\baseline), \SecVPre demonstrates higher efficiency in the two-party setting and comparable performance when the party number increases;
(2) compared to the state-of-the-art cmPSI protocol under honest majority, our proposed cmPSI protocol provides a stronger security guarantee (dishonest majority) while improving efficiency by up to $7.78\times$ in time and $8.73\times$ in communication sizes;
(3) our proposed secure shuffle protocol outperforms the state-of-the-art secure shuffle protocol by up to  $138.34\times$ in time and $132.13\times$ in communication sizes.

\end{abstract}

\begin{IEEEkeywords}
Secure multi-party computation,
private set intersection, secure dataset join
\end{IEEEkeywords}

\section{Introduction}

\IEEEPARstart{P}{rivacy-preserving} machine learning (\PPML) enables multiple parties to cooperatively train machine learning models on their datasets with privacy preservation. 
Among various \PPML paradigms, multi-party 
vertical \PPML (\VPPML), where parties' datasets are vertically distributed, i.e. overlap in sample IDs but have distinct feature sets, has extensive real-world applications across multiple fields such as healthcare, finance~\cite{chen2021homomorphic,lin2024ents,mohassel2017secureml}. Multi-party \VPPML significantly expands the feature dimension, thereby enhancing the performance of the trained model. For instance, in a healthcare \VPPML scenario, there are multiple organizations, e.g., a hospital with patients' health records, a research institute with genetic data, and an insurance company with demographics. They can apply multi-party \VPPML to expand the feature dimension, and jointly train a diagnostic model on their vertically distributed datasets with privacy preservation.

A critical step in \VPPML is the secure dataset join. As is shown in Figure \ref{fig:arch}, this step typically involves two phases: (1) computing the intersection IDs across datasets while keeping IDs private; and (2) aligning features corresponding to these intersection IDs while keeping the datasets private. After the two phases, each party $P_i$ ($i \ge 2$) holds a secret-shared and joint training dataset $\share{\mathcal{D}}_i$ consisting of the aligned features corresponding to the intersection IDs. Then parties can train a machine learning model on $\share{\mathcal{D}}_i$ with privacy preservation.

\begin{figure}[tb]
    \centering
    \includegraphics[width=0.47\textwidth]{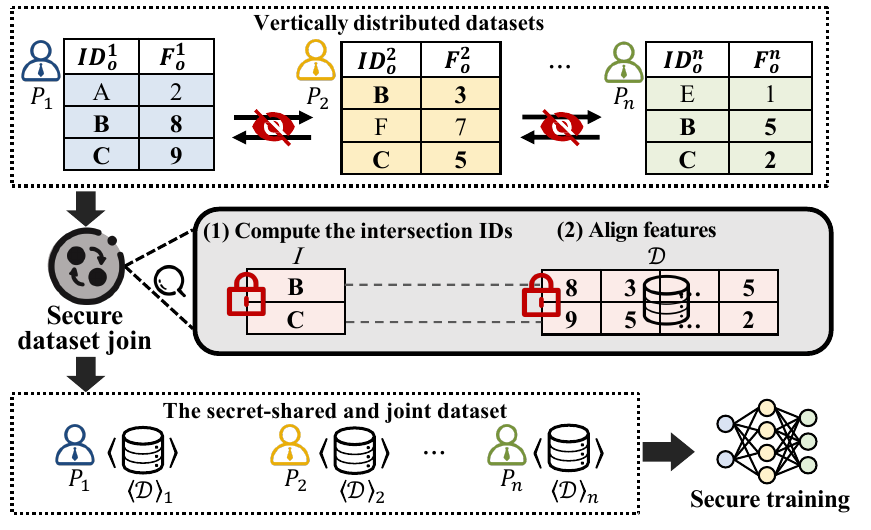}
    \vspace{-1em}
    \caption{Illustration of secure dataset join in vertical \PPML. Multiple parties $P_1, P_2,\ldots, P_n$, sharing the intersection IDs $B$ and $C$.}
    \label{fig:arch}
\end{figure}

\IEEEpubidadjcol

However, there remains a critical gap in providing a practical secure multi-party dataset join framework for \VPPML due to at least one of the following reasons.
(1) Most existing secure multi-party dataset join methods for \VPPML are insecure since these methods expose the intersection IDs~\cite{kolesnikov2017practical,pinkas2018scalable,bay2021multi,oringkstar24}. Specifically, exposing intersection IDs could disclose sensitive information about individuals. For example, in the healthcare scenario above, exposing intersection IDs discloses patient lists of the hospital. Besides, exposing intersection IDs could suffer from reconstruction attacks, potentially recovering the original training data, as highlighted by Jiang \textit{et al.}~\cite{jiang2022comprehensive}.
(2)
Although Gao \textit{et al.} \cite{gao2025peafowl} propose a secure multi-party dataset join method \Peafowl for \VPPML that keeps intersection IDs private, this method relies on a strong trust assumption requiring a non-colluding auxiliary server. This assumption could be impractical in real-world multi-party \VPPML settings, where such server typically does not exist.
(3) 
While Liu \textit{et al.}~\cite{liu2023iprivjoin} propose a secure dataset joins method \baseline for \VPPML that keeps intersection IDs private without requiring a non-colluding auxiliary server, it is limited to the two-party setting.

We note that the secure dataset join in the multi-party setting is notoriously harder to tackle than in the two-party setting:
(1) the multi-party setting introduces difficulty in preserving the privacy of intersection IDs across any subset of parties;
(2) the multi-party setting introduces the risk of collusion among subsets of parties, which is absent with only two parties; 
and (3) the multi-party setting increases considerable costs when supporting three or more parties compared to two-party settings \cite{kolesnikov2017practical}.

As a result, there remains a critical challenge:
\textit{How to achieve practical secure multi-party dataset join for \VPPML without a non-colluding auxiliary server?}

\subsection{Our Approaches}
To address the above challenge, we propose \SecVPre, the first practical secure dataset join framework for multi-party \VPPML that keeps IDs private without a non-colluding auxiliary server. 

\noindent \textbf{High-level Idea.}
To preserve the privacy of intersection IDs across any subset of parties while resisting collusion attacks by up to $n-1$ parties in semi-honest settings, where $n$ is the number of parties, \SecVPre consists of our two proposed protocols.
(1) An efficient circuit-based multi-party private set intersection protocol (cmPSI), which leverages oblivious key-value store (OKVS) and oblivious pseudorandom function (OPRF) to enable each party to obtain secret-shared flags $\share{\Phi_{b \times 1}}_i$ while keeping IDs private, where $b$ is the size of the input dataset. 
Specifically, for $j \in b$, if the party $P_1$'s ID vector $\textit{ID}^1[j]$ is in the intersection, the plaintext flag $\Phi[j]$ is 0, and randomness otherwise.
(2) A secure multi-party feature alignment protocol (smFA), which enables each party to obtain the secret-shared and joint dataset while keeping datasets private. Initially, parties use OKVS and secret sharing to obtain a secret-shared dataset consisting of aligned features and redundant data. Specifically, for $j \in [b]$, if the party $P_1$'s ID vector $\textit{ID}^1[j]$ is in the intersection, the data in the $j$-th row of the dataset is aligned features, and redundant data otherwise.
To remove redundant data, parties execute a secure shuffle protocol on the secret-shared dataset concatenating secret-shared flags $\share{\Phi}$, reconstruct shuffled $\share{\hat{\Phi}}$ into plaintext flags $\hat{\Phi}$, and remove redundant data from the dataset based on whether corresponding entries in $\hat{\Phi}$ are randomness.
As the secure shuffle disrupts the original order of flags $\hat{\Phi}$ and each plaintext value in $\hat{\Phi}$ is 0 or randomness, $\hat{\Phi}$ does not disclose the original IDs themselves, ensuring that no party can infer the original intersection IDs.

To enhance the efficiency while ensuring the security of our \SecVPre, we introduce the following optimizations:
(1) for optimizing the cmPSI protocol,  we propose an optimized communication structure for transmitting OKVS tables, dynamically configured according to parameters, i.e. the number of parties $n$, dataset size $m$, bit length $l$, network bandwidth, and network latency;
and (2) for optimizing the smFA protocol, we propose a novel secure shuffle protocol, which reduces the communication sizes for one party from \(O(ndlm\log m)\) to \(O(ndlm)\) compared to the SOTA, where $d$ is the feature dimension. Since secure multi-party shuffle dominates the communication and time costs (over $99\%$ in our experiments) in smFA, our novel secure shuffle protocol significantly boosts the practicality of smFA in the multi-party setting.

Overall, \SecVPre follows a workflow similar to the SOTA two-party framework \baseline\cite{liu2023iprivjoin}: first hashes the dataset, securely generates a secret-shared dataset that includes aligned features and redundant data, and finally removes the redundant data through a secure shuffle protocol. The main difference is that we designed a multi-party protocol for each step. Additionally, by reusing the results of OPRF and employing a more lightweight OKVS primitive compared to the oblivious programmable pseudorandom function (\OPPRF) used by the \baseline, our \SecVPre achieves lower communication costs and fewer communication rounds, making it more efficient than \baseline in two-party settings.

\subsection{Contributions}
Our contributions are summarized below: 
\begin{itemize}[itemsep=0pt, topsep=0pt, partopsep=0pt, parsep=0pt]
\item  To our best knowledge, we propose the first practical secure multi-party dataset join framework for \VPPML without a non-colluding auxiliary server, \SecVPre, which preserves the privacy of intersection IDs across any subset of parties while resisting collusion attacks by up to $n-1$ parties. 
\item We propose two efficient protocols for \SecVPre: (1) a cmPSI protocol, which obtains secret-shared flags indicating intersection IDs via an optimized communication structure; and (2) a secure multi-party feature alignment protocol, which obtains the secret-shared and joint dataset using the secret-shared flags via our proposed efficient secure multi-party shuffle protocol. 
\end{itemize}

We evaluate \SecVPre across various party numbers ($2 \sim 6$) using six real-world datasets, with feature dimensions between $10$ and $111$ and total dataset sizes between $1353$ and $253680$. The experimental results show that: (1) in the two-party setting, \SecVPre outperforms the SOTA secure two-party dataset join framework \baseline~\cite{liu2023iprivjoin} by $1.69\times \sim 1.92\times$ and $1.50\times \sim 1.72\times$ in terms of time and communication sizes, respectively. Meanwhile \SecVPre still achieves comparable efficiency to the two-party \baseline even as the number of parties increases; 
(2) our proposed cmPSI protocol with a stronger security guarantee (resists against up to \(n-1\) colluding parties) outperforms the SOTA cmPSI protocol~\cite{chandran2021circuit} (resists against up to \(n/2-1\) colluding parties) by $1.39\times \sim 7.78\times$ and $5.58\times \sim 8.73\times$ in terms of time and communication sizes, respectively. 
(3) our proposed secure multi-party shuffle protocol outperforms the SOTA secure shuffle protocol\cite{mp-spdz} by $21.44\times \sim 138.34\times$ and $105.69\times \sim 132.13\times$ in terms of time and communication sizes, respectively.
\section{Related work}

\subsection{Private Set Intersection}

\subsubsection{Multi-party PSI (mPSI)}
mPSI enables multiple parties to compute the intersection of their ID sets while keeping non-intersection IDs private. 
In mPSI-based secure dataset join, parties use mPSI to obtain intersection IDs in plaintext. They can then locally sort the features according to the lexicographical order of these intersection IDs and secretly share these ordered features with other parties. By merging each row of these features, the parties can obtain a secret-shared and joint dataset.
Despite significant advances in efficient mPSI protocols~\cite{kolesnikov2017practical,inbar2018efficient,kavousi2021efficient,bay2021practical,bay2021multi,wu2024ring}, existing mPSI solutions usually expose intersection elements, leading to privacy leakage.

\subsubsection{Circuit-based PSI (cPSI)}
The initial cPSI protocol introduced by Huang \textit{et al.} ~\cite{huang2012private} allows two parties to securely obtain the secret-shared intersection of their ID sets while keeping the IDs private. Subsequent research on two-party cPSI~\cite{rindal2021vole,garimella2021oblivious,chandran2021circuit,rindal2022blazing,ma2022secure} explores leveraging OPPRF or private set membership techniques to extend cPSI for obtaining secret-shared aligned features corresponding to intersection IDs, making it suitable for secure dataset join in \VPPML.
However, cPSI outputs a joint dataset that includes both features corresponding to intersection IDs and redundant data (i.e. secret-shared zeros). According to the literature~\cite{liu2023iprivjoin}, training on the joint dataset output by cPSI can lead to a $3\times$ increase in both training time and communication sizes, compared with training on the joint dataset without redundant data.

\subsubsection{Circuit-based Multi-party PSI (cmPSI)}
Existing cmPSI protocols ~\cite{li2021prism,chandran2021efficient,ma2022secure} enable parties to obtain secret-shared intersection IDs but do not consider obtaining aligned features corresponding to the intersection IDs.

\subsection{Secure Two-party Dataset Join}
Liu \textit{et al.} \cite{liu2023iprivjoin} propose a secure two-party dataset join framework for \VPPML, \baseline. Similar to cPSI, \baseline first employs the OPPRF to construct a secret-shared and joint dataset that includes both aligned features corresponding to intersection IDs and redundant data. To eliminate redundant data, \baseline utilizes a secure shuffle protocol. However, \baseline is inherently designed for the two-party setting.

\subsection{Secure Multi-party Dataset Join with Non-colluding Auxiliary Server}

Gao \textit{et al.}~\cite{gao2025peafowl} propose \Peafowl, a secure multi-party dataset join solution, but \Peafowl requires an auxiliary server to protect intersection IDs and assumes this auxiliary server never colludes with any parties, which is a strong trust assumption. This assumption could be impractical in real-world multi-party \VPPML settings, where such server typically does not exist.
In contrast to \Peafowl, our \SecVPre does not require a non-colluding auxiliary server, thereby achieving stronger security guarantees.

As is shown in Table~\ref{tab:techcmp}, we summarize the strengths and limitations of the representative secure dataset join methods.
\begin{table}[htbp]
\belowrulesep=0pt 
\aboverulesep=0pt
\caption{Comparison of various secure dataset join methods. Here, `Parties' denotes the number of supported parties, `ID-Private' denotes whether intersection IDs are kept private, `No Redundant' denotes whether redundant data is avoided in the joint dataset, and `No Server' denotes whether a non-colluding auxiliary server is not required. The better settings within each column are highlighted in green.}
\centering
\scalebox{0.82}{
\setlength{\tabcolsep}{3pt}
\begin{tabular}{c|cccc}
\toprule
Method    & Parties           & ID-Private & No Redundancy & No Server \\ \midrule
mPSI      & \textcolor{GreenB}{\textbf{$n\ge2$}} & no         & \textcolor{GreenB}{\textbf{yes}}    & \textcolor{GreenB}{\textbf{yes}}     \\
cPSI      & $n=2$               & \textcolor{GreenB}{\textbf{yes}}        & no  & \textcolor{GreenB}{\textbf{yes}}      \\
\Peafowl & \textcolor{GreenB}{\textbf{$n\ge2$}}               & \textcolor{GreenB}{\textbf{yes}}       & \textcolor{GreenB}{\textbf{yes}}    & no     \\
\baseline & $n=2$               & \textcolor{GreenB}{\textbf{yes}}       & \textcolor{GreenB}{\textbf{yes}}    & \textcolor{GreenB}{\textbf{yes}}     \\
Ours      & \textcolor{GreenB}{\textbf{$n\ge2$}} & \textcolor{GreenB}{\textbf{yes}}        & \textcolor{GreenB}{\textbf{yes}}  & \textcolor{GreenB}{\textbf{yes}}
\\ \bottomrule
\end{tabular}
}
\label{tab:techcmp}
\end{table}
\section{Overview of \SecVPre}

We summarize the frequently used notations in Table~\ref{tab:notations}.

\begin{table}[htbp]
\belowrulesep=0pt 
\aboverulesep=0pt
\centering
\caption{Notation Table.}
\scalebox{0.82}{
\setlength{\tabcolsep}{1pt}
\renewcommand{\arraystretch}{0.95}
\begin{tabular}{c|l}
\toprule
Symbol & Description
\\
\midrule
$n$ & The number of parties.
\\
$[x_1,x_2]$ & \begin{tabular}[c]{@{}c@{}} The set $\{x_1,...,x_2\}$. \end{tabular}
\\
$[x]$ & \begin{tabular}[c]{@{}c@{}} The set $\{1,2,...,x\}$. \end{tabular}
\\
$P_i$  &
\begin{tabular}[c]{@{}c@{}} The $i$-th party for $i \in [n]$.\end{tabular}
\\
$X^i$ & \begin{tabular}[c]{@{}c@{}} The data belonging to $P_i$. \end{tabular}
\\
${\textit{ID}^i}$  &
\begin{tabular}[c]{@{}c@{}} The IDs in hash table of $P_i$.
\end{tabular}
\\
${F^i}$  & 
\begin{tabular}[c]{@{}c@{}} The features corresponding to ${\textit{ID}^i}$.
\end{tabular}
\\
$I_{id}$ & \begin{tabular}[c]{@{}c@{}} The intersection IDs ($I_{id} = \bigcap_{i=1}^n \textit{ID}^i$).  \end{tabular}
\\
$\Phi$ & \begin{tabular}[c]{@{}c@{}} The flags indicating intersection IDs $I_{id}$.  \end{tabular}
\\
$\mathcal{D}$ & \begin{tabular}[c]{@{}c@{}} The dataset consisting of aligned features corresponding to $I_{id}$. \end{tabular}
\\
$m$ & \begin{tabular}[c]{@{}c@{}}The size of dataset held by each party.\end{tabular}
\\
$b$ & \begin{tabular}[c]{@{}c@{}}The number of bins in hash table.\end{tabular}
\\
$d_i$ & \begin{tabular}[c]{@{}c@{}}
The feature dimension of datasets held by $P_i$.\end{tabular}
\\
$d$ & \begin{tabular}[c]{@{}c@{}}
The total feature dimension ($d = \sum_{i=1}^{n}d_i$).\end{tabular}
\\
$c$ & \begin{tabular}[c]{@{}c@{}} The size of $I$ or $\mathcal{D}$.  \end{tabular}
\\
$h$ & \begin{tabular}[c]{@{}c@{}} The number of hash functions.  \end{tabular}
\\
$\mathbb{Z}_{2^l}$ & \begin{tabular}[c]{@{}c@{}} Ring of size $l$ bits; $l = 64$ in this paper.\end{tabular}
\\
$\share{x}_i$ & \begin{tabular}[l]{@{}l@{}} The secret-shared value  of $x \in \mathbb{Z}_{2^{l}}$ held by $P_i$
\\
s.t. $x=\sum_{i=1}^n \share{x}_i$.
\end{tabular}
\\
$x\|y$ & \begin{tabular}[c]{@{}c@{}}The concatenation of $x$ and $y$.\end{tabular}
\\
$X\|Y$ & \begin{tabular}[c]{@{}c@{}}The row-by-row concatenation of $X$ and $Y$.\end{tabular}
\\
$|X|$ & \begin{tabular}[c]{@{}c@{}}The size of $X$.\end{tabular}
\\
$X_{a\times b}$ & \begin{tabular}[c]{@{}c@{}}The matrix $X$ with size $a\times b$.\end{tabular}
\\
$X[i]$ & \begin{tabular}[c]{@{}c@{}} The $i$-th element or $i$-th row of $X$. \end{tabular}
\\
$X[i][j]$ & \begin{tabular}[c]{@{}c@{}} The $j$-th element in the $i$-th row of $X$. \end{tabular}
\\
$\kappa$ & \begin{tabular}[c]{@{}c@{}}The computational security parameter.
\end{tabular}
\\
$\lambda$ & \begin{tabular}[c]{@{}c@{}}The statistical security parameter.
\end{tabular}
\\
\bottomrule
\end{tabular}
}
\label{tab:notations}
\end{table}

\subsection{Problem Statement}
As is shown in figure~\ref{funct:secdjoin}, the secure multi-party dataset join functionality \(\mathcal{F}_{\textit{smDJoin}}\) takes as input from each party \(P_i\) (\(i \in [n]\)) a dataset \((\textit{ID}_o^i \| F_o^i)\), where \(\textit{ID}_o^i\) is an \(m\)-element vector of IDs and \(F_o^i\) is an \(m \times d_i\) feature matrix. 
Here, “\(\|\)” indicates the row-by-row concatenation of sample IDs and their corresponding features.
Once \(\mathcal{F}_{\textit{smDJoin}}\) is executed, each \(P_i\) obtains a secret-shared and joint dataset \(\share{\mathcal{D}}_i\) that consists of the aligned features corresponding to the intersection IDs.

\begin{figure}[htbp]
    \centering
    \setlength{\fboxsep}{3pt}
    \fbox{
    \scalebox{0.85}{
        \parbox{0.52\textwidth}{
        {\centering \textbf{\underline{Functionality $\mathcal{F}_{ \textit{smDJoin}}$}}\par}
        \medskip
        
        \textbf{Parameters:}
        The number of parties $n$, the size of the dataset held by each party $m$,
        the feature dimension of the dataset held by each party $d_i$ ($i \in [n]$), the total feature dimension $d$,
        and the bit length of element $l$.
        
        \textbf{Inputs:} For each $i \in [n]$, $P_i$ holds dataset $\textit{ID}_o^i||F_o^i$.
        
        \textbf{Functionality:}
        \begin{enumerate}[label=\arabic{enumi}.,leftmargin=1.2em, itemsep=2pt, topsep=2pt, parsep=0pt]
            \item Compute intersection
                      $I_{id} = \bigcap_{i=1}^{n} \textit{ID}^i_o$.
                  Let $c = |I_{id}|$.
            \item Set $\mathcal{D}[j] = \{ F^1_o[j_1][1],\ldots,F^1_o[j_1][d_1],\ldots,F^n_o[j_n][1],\\\ldots,F^n_o[j_n][d_n] \}_{\forall j \in [c]}$, where $F^i_o[j_i][u]$ for $u \in [d_i]$ is the corresponding feature value of $\textit{ID}[j_i] = I[j]$ from $P_i$ for $j_i \in [m]$, $i \in [n]$.
            
            \item Sample $\share{\mathcal{D}}_i \leftarrow \mathbb{Z}_{2^l}^{c \times d}$ such that $\sum_{i=1}^n\share{\mathcal{D}}_i = \mathcal{D}$.
            
            \item Return $\share{\mathcal{D}}_i$ to $P_i$ for $i \in [n]$.
        \end{enumerate}
        }
    }
    }
    \caption{Ideal functionality of secure dataset join.}
    \label{funct:secdjoin}
\end{figure}


\subsection{Security Model}
In this paper, we adopt a semi-honest security model with a dishonest majority. That is, a semi-honest adversary $\mathcal{A}$ can corrupt up to $n-1$ parties and aims to learn extra information from the protocol execution while correctly executing the protocols. 
We establish semi-honest security in the simulation-based model, with our construction involving multiple sub-protocols described using the hybrid model~\cite{canetti2001universally}. 
We give the formal security definition as follows:
\begin{definition}
[Semi-honest Model]
Let $\text{view}^\Pi_C(x,y)$ be the views (including the input, random tape, and all received messages) of $C$ in a protocol $\Pi$, where $C$ is the set of corrupted parties, $x$ is the input of $C$ and $y$ is the input of the uncorrupted party. Let $out(x,y)$ be the protocol's output of all parties and $\mathcal{F}(x,y)$ be the functionality's output. $\Pi$ is said to securely compute a functionality $\mathcal{F}$ in the semi-honest model if for any adversary $\mathcal{A}$ there exists a simulator $\text{Sim}_C$ such that for all inputs $x$ and $y$,
$$
\{\text{view}^\Pi_C(x,y), \text{out}(x,y)\} \approx_c \{\text{Sim}_C(x, \mathcal{F}_C(x,y)), \mathcal{F}(x,y)\}.
$$
\end{definition}
\section{Preliminary}

\subsection{Cuckoo Hashing \& Simple Hashing}
\label{subsec:Cuckoohash}

Cuckoo hashing~\cite{pagh2001Cuckoo} relies on $h$ hash functions, denoted $\{H_k: \{0,1\}^\sigma \mapsto [b]\}_{k \in [h]}$, to map each of $m$ elements $\{x_j\}_{j \in [m], x_j \in \mathbb{Z}_{2^\sigma}}$ to \textit{one} of $h$ potential bins $\{H_k(x_j)\}_{ \forall k \in [h]}$, where $b = \omega m$.  
While cuckoo hashing ensures that each bin holds at most one element, some elements may not find an available bin. Traditional cuckoo hashing employs a stash for these overflow elements. However, Pinkas \textit{et al.}~\cite{pinkas2018scalable} introduce a stash-free variant of Cuckoo hashing that removes the need for this storage. Their findings indicate that with $h=3$ hash functions and a table size $b=1.27m$, the probability of failure for an element failing to find an available bin is at most $2^{-40}$. We utilize these parameters to implement Cuckoo hashing without a stash.

In contrast, simple hashing uses $h$ hash functions  $\{H_k: \{0,1\}^\sigma \mapsto [b]\}_{k \in [h]}$ to map each of $m$ elements $\{x_j\}_{j \in [m], x_j \in \mathbb{Z}_{2^\sigma}}$ into \emph{all} bins $H_k(x_j)_{\forall k \in [h]}$. Thus, unlike Cuckoo hashing, simple hashing allows each bin to hold multiple elements.

\subsection{Oblivious Key-Value Store}
Oblivious key-value store~\cite{garimella2021oblivious} (OKVS) is designed to encode $m$ pairs of key-value pairs where the values are random, such that an adversary cannot determine the original input keys from the encoding. This makes the encoding oblivious to the input keys. The definition is as follows:
\begin{definition} [Oblivious Key-Value Store]
An OKVS is parameterized by a key universe $\mathcal{K}$, a value universe $\mathcal{V}$, input length $m$, output length $m'$ and consists of the two functions: 
\begin{flushleft}
\begin{itemize}
    \item $\mathcal{S} = \textit{Encode}(I)$: The encode algorithm receives a set of $m$ key-value pairs $I = \{(k_i, v_i) \}_{\forall i \in [m], (k_i, v_i) \in (\mathcal{K} \times \mathcal{V}) }$ and outputs the encoding $\mathcal{S} \in \mathcal{V}^{m'} \cup \{\perp\}$.
    \item $v = \textit{Decode}(\mathcal{S}, k)$: The decode algorithm receives the encoding $\mathcal{S} \in \mathcal{V}^{m'}$, a key $k \in \mathcal{K}$ and outputs the associated value $v \in \mathcal{V}$.
\end{itemize}
\end{flushleft}
\end{definition}
An OKVS is computationally oblivious\cite{garimella2021oblivious}, if for any two sets of $m$ distinct keys $\{k_i\}_{\forall i \in [m], k_i \in \mathcal{K}}$ and $\{k'_i\}_{\forall i \in [m],k'_i \in \mathcal{K}}$ and $m$ values $\{v_i\}_{i \in [m]}$ each drawn uniformly at random from $\mathcal{V}$, a computational adversary is not able to distinguish between $S =\textit{Encode}(\{(k_i, v_i)\}_{i \in [n]})$ and $S' =\textit{Encode}(\{(k'_i, v_i)\}_{i \in [m]})$.
Additionally, OKVS has the doubly oblivious~\cite{garimella2021oblivious,rindal2022blazing,rindal2021vole} property where the output of the encoding is a uniformly random element from $\mathcal{V}^{m'}$. In addition, another critical property is random decoding, where the decoded value for any non-input key must be indistinguishable from a uniformly random element from $\mathcal{V}$. These properties are particularly useful for our \SecVPre.

\subsection{Oblivious Pseudorandom Function}
As is shown in Figure~\ref{funct:oprf}, oblivious pseudorandom function (OPRF)~\cite{freedman2005keyword} allows $\mathcal{S}$ to learn a pseudorandom function (PRF) key $k$, while $\mathcal{R}$ learns $F_k(x_1),\dots, F_k(x_{\beta})$ for the inputs $(x_1,\dots,x_{\beta})$. 
During the OPRF, neither $\mathcal{S}$ nor $\mathcal{R}$ can learn any additional information. Specifically, $\mathcal{S}$ does not learn any input $x_i (i \in [\beta])$ from $\mathcal{R}$ and $\mathcal{R}$ does not learn PRF key $k$.

\begin{small}
\begin{figure}[h]
    \centering
    \setlength{\fboxsep}{3pt}
    \fbox{
        \scalebox{0.85}{
        \parbox{0.52\textwidth}{
        {\centering \textbf{\underline{Functionality $\mathcal{F}_{\textit{OPRF}}$}}\par}
        \smallskip
        
        \textbf{Parameters:} The number of elements $\beta$ and the bit length of element $l$. A PRF $F_{(\cdot)}(\cdot)$.

        \textbf{Receiver's inputs:} $\{x_i\}_{\forall i\in [\beta], x_i \in \{0,1\}^l}$

        \textbf{Funcitionality:}
        \begin{enumerate}[leftmargin=1em,itemsep=0pt, topsep=0pt, partopsep=0pt, parsep=0pt]
        \item[-] Sender $\mathcal{S}$ obtains the PRF key $k$.
        \item[-] Receiver $\mathcal{R}$ obtains $\{F_k(x_i)\}_{\forall i \in [\beta]}$.
        \end{enumerate}
        }
    }
    }
    \caption{Ideal functionality of OPRF}
    \label{funct:oprf}
\end{figure}
\end{small}

\subsection{Additive Secret Sharing}
Additive secret sharing (ASS)\cite{mohassel2017secureml,demmler2015aby} is a cryptographic technique that splits a private value into multiple shares, allowing the original value to be reconstructed by summing these shares. It is formally defined as follows:
\begin{itemize}[leftmargin=1em,itemsep=0pt, topsep=0pt, partopsep=0pt, parsep=0pt]
\item \textbf{Secret-shared values}: an \( l \)-bit value \( x \) is additively secret-shared among \( n \) parties as shares \( \share{x}_1, \dots, \share{x}_n \), where each \( \share{x}_i \in \mathbb{Z}_{2^l} \) for $i \in [n]$. The value \( x \) can be reconstructed as:  
$
\sum_{i=1}^{n} \share{x}_i \equiv x \pmod{2^l}$.
\item  \textbf{Sharing $\mathit{Shr}(x, i, Q)$}: party $P_i$ randomly samples $r_j \in \mathbb{Z}_{2^l}$ for all $j \in [|Q|-1]$, sets its own share $\share{x}_i = x - \sum_{j=1}^{|Q|-1} r_j$ and sends $r_j$ to party $P_{Q[j]}$, who sets $\share{x}_{Q[j]} = r_j$.
\item \textbf{Reconstruction $\mathit{Rec}(x)$}: each party $P_i$ ($\forall i \in [n]$) sends its share $\share{x}_i$ to all other parties. Once all shares are gathered, each party $P_i$ ($\forall i \in [n]$) reconstruct the original value by computing $x = \sum_{i=1}^n \share{x}_i$.
\end{itemize}

\subsection{Oblivious Shuffle \& Secure Shuffle }
As is shown in Figure~\ref{func:oshuffle}, the oblivious shuffle allows one party \(P_1\) who holds an matrix \(X_{m\times d}\) and receives \(\share{\hat{X}}_1\), while the other party \(P_2\) who holds a random permutation \(\pi:[m]\rightarrow[m]\) and obtains \(\share{\hat{X}}_2\), ensuring that \(\hat{X} = \pi(X)\). During this oblivious shuffle process, the permutation provider \(P_2\) learns nothing about \(X\), and the data provider \(P_1\) learns nothing about \(\pi\). In this paper, we adopt the oblivious shuffle protocol $\Pi_{\rm {O-Shuffle}}$ proposed in \cite{liu2023iprivjoin}, which realizes \(\mathcal{F}_{\textit{O-Shuffle}}\) with the same security guarantees as ours.

\begin{figure}[h]
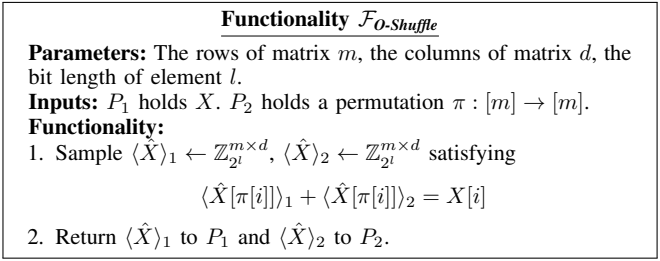

\centering
\fbox{
\scalebox{0.85}{
\parbox{0.52\textwidth}{
\setstretch{0.92}
        {\centering \textbf{\underline{Functionality $\mathcal{F}_{\textit{O-Shuffle}}$}}\par}
        \smallskip

\noindent\textbf{Parameters:} The rows of matrix $m$, the columns of matrix $d$, the bit length of element $l$.

\noindent\textbf{Inputs:} $P_1$ holds $X$. $P_2$ holds a permutation $\pi:[m]\rightarrow[m]$.

\noindent\textbf{Functionality:}
\begin{enumerate}[label=\arabic{enumi}.,leftmargin=1.2em,itemsep=0pt, topsep=0pt, partopsep=0pt, parsep=0pt]
    \item Sample $\share{\hat{X}}_1 \leftarrow \mathbb{Z}_{2^{l}}^{m\times d}$, $\share{\hat{X}}_2 \leftarrow \mathbb{Z}_{2^{l}}^{m\times d}$ satisfying
    \[
    \share{\hat{X}[\pi[i]]}_1 + \share{\hat{X}[\pi[i]]}_2 = X[i]
    \] 
    \item Return $\share{\hat{X}}_1$ to $P_1$ and $\share{\hat{X}}_2$ to $P_2$.
\end{enumerate}
}
}
}
\caption{Ideal functionality of oblivious shuffle}
\label{func:oshuffle}
\end{figure}

As is shown in Figure~\ref{func:shuffle}, the secure shuffle allows the random permutation of share matrix $\share{X}$ from all parties, resulting in refreshed secret-shared matrix $\share{X'} = \share{\pi(X)}$, while keeping the permutation $\pi$ unknown to the parties. 
\begin{figure}[h]
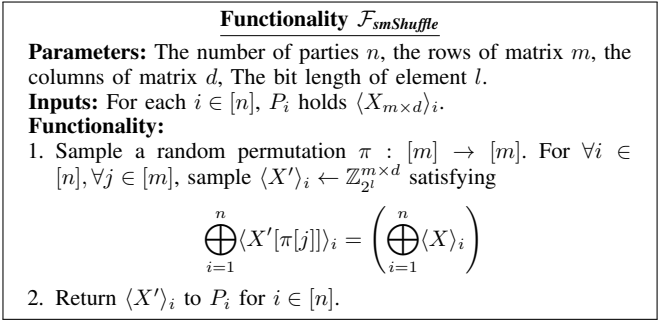

\centering
\fbox{
\scalebox{0.85}{
\parbox{0.52\textwidth}{
\setstretch{0.92}
        {\centering \textbf{\underline{Functionality $\mathcal{F}_{\textit{smShuffle}}$}}\par}
        \smallskip

\noindent\textbf{Parameters:} The number of parties $n$, the rows of matrix $m$, the columns of matrix $d$, The bit length of element $l$.

\noindent\textbf{Inputs:} For each $i \in [n]$, $P_i$ holds $\share{X_{m \times d}}_i$.

\noindent\textbf{Functionality:}
\begin{enumerate}[label=\arabic{enumi}.,leftmargin=1.2em,itemsep=0pt, topsep=0pt, partopsep=0pt, parsep=0pt]
    \item Sample a random permutation $\pi : [m] \rightarrow [m]$. For $\forall i \in [n], \forall j \in [m]$, sample $\share{X'}_i \leftarrow \mathbb{Z}_{2^{l}}^{m\times d}$ satisfying
    \[
    \bigoplus_{i=1}^n \share{X'[\pi[j]]}_i = \left( \bigoplus_{i=1}^n \share{X}_i \right)
    \] 
    \item Return $\share{X'}_i$ to $P_i$ for $i \in [n]$.
\end{enumerate}
}
}
}
\caption{Ideal functionality of secure multi-party shuffle}
\label{func:shuffle}
\end{figure}
\section{Design}
\label{sec:design}

\subsection{Setup Phase}
\label{subsec:off}

During the setup phase, a randomly selected party, denoted as $P_1$, applies stash-less cuckoo hashing on its dataset $\textit{ID}_o^1\|F_o^1$, while the remaining parties, denoted as $P_i(i\ge 2)$ apply simple hashing on their respective dataset $\textit{ID}_o^i\|F_o^i$. Each party maps its dataset based on the unique $id \in \textit{ID}_o^i$ using $h$ public hash functions $\{H_k: \{0,1\}^l \mapsto [b]\}_{k \in [h]}$ along with a hash table ($B_1,\ldots,B_b$). The specific hashing operations are as follows:
\begin{itemize}[leftmargin=1em,itemsep=0pt, topsep=0pt, partopsep=0pt, parsep=0pt]
    \item $P_1$ applies cuckoo hashing using $h$ hash functions $\{H_k: \{0,1\}^l \mapsto [b]\}_{k \in [h]}$, where $b = \omega m$ and $\omega > 1$, that maps each row $(id_j^1,f_{j,1}^1,\ldots,f_{j,d_1}^1) \in \textit{ID}^1_o\|F^1_o$ for $j\in[m]$ to \textit{one of the bins} $\{B_{H_k(id_j^1)}\}_{k\in[h]}$. After hashing, $P_1$ obtains $\textit{ID}^1 \| F^1$, where $\textit{ID}^1[j] = (id^1\|j) $ for $ j \in \{H_k(id^1)\}_{k\in[h]} \cap id^1 \in \textit{ID}^1_o $ and ${F^1}[j]$ is the feature matrix corresponding to $\textit{ID}^1$. Since $b > m$ (where $m = |\textit{ID}^1_o|$), $P_1$ fills each empty $j$-th bin with uniformly random values: 
    $\textit{ID}^1[j] = r_x\|r_i$ where $r_x \in \mathbb{Z}_{2^{l}}$, $r_i \in \mathbb{Z}_{2^l}\setminus [b]$ and $F^1[j] = (r_1,\ldots,r_{d_1})$ where $r_1,\ldots,r_{d_1} \in \mathbb{Z}_{2^l}$.
    \item Each $P_i (i \ge 2)$ applies simple hashing using $h$ hash functions $\{H_k: \{0,1\}^l \mapsto [b]\}_{k \in [h]}$ that maps each row $(id_j^i, f_{j,1}^i, \ldots, f_{j,d_i}^i) \in \textit{ID}^i_o\|F^i_o$ for $j\in[m]$ to \textit{all bins} $\{B_{H_k(id_j^i)}\}_{k\in[h]}$. After hashing, $P_i$ obtains $\textit{ID}^i \| F^i$, where $\textit{ID}^i[j] = \{ (id^i\|j)\}_{id^i \in \textit{ID}_o^i, j \in \{H_1(id^i), \ldots,H_h(id^i) \} }$ and $F^i$ represents the features corresponding to $\textit{ID}^i$.
\end{itemize}

In particular, for cuckoo hashing, we follow the approach in~\cite{pinkas2018scalable,demmler2018pir}, which guarantees that each element \( x \in X \) is placed in the hash table with an overwhelming probability. After hashing, each bin in \( P_1 \)’s hash table holds exactly one unique transformed ID, whereas each bin in \( P_i \)’s (for \( i \ge 2 \)) hash table may contain zero or multiple transformed IDs. Furthermore, rather than storing the original ID directly, parties store its concatenation with the bin index \( j \) (i.e., \( id \| j \)).

\subsection{Private ID Intersection}
\label{subsec:idintersection}

\subsubsection{Definition}
As is shown in Figure~\ref{funct:cmpsi}, 
functionality $\mathcal{F}_{\textit{cmPSI}}$ enables each party $P_i$ ($i \in [n]$) to hold $\textit{ID}^i$ processed by hashing and to obtain secret-shared flags $\share{\Phi_{b\times1}}_i$. 
For each $j \in [b]$, $\Phi[j]$ ($= \sum_{i=1}^n\share{\Phi}_i$) indicates whether $id_j (= \textit{ID}^1[j])$ is an element in intersection IDs $I_{id}$. In particular, $\Phi[j]$ is 0 if $id_j \in I_{id}$. Or, $\Phi[j]$ is a randomness if $id_j \notin I_{id}$.

\begin{figure}[h]
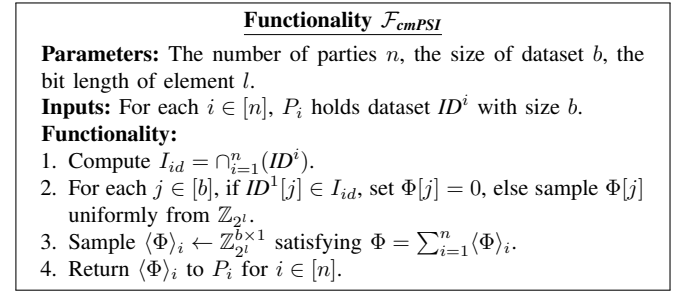

    \centering
    \fbox{
    \scalebox{0.85}{
        \parbox{0.52\textwidth}{
        {\centering \textbf{\underline{Functionality $\mathcal{F}_{\textit{cmPSI}}$}}\par}
         \smallskip
        
        \textbf{Parameters:} The number of parties $n$, the size of dataset $b$, the bit length of element $l$.

        \textbf{Inputs:} For each $i \in [n]$, $P_i$ holds dataset $\textit{ID}^i$ with size $b$.

        \textbf{Functionality:}
        \begin{enumerate}[label=\arabic{enumi}.,leftmargin=1.2em,itemsep=0pt, topsep=0pt, partopsep=0pt, parsep=0pt]
            \item Compute $I_{id} =\cap_{i=1}^n(\textit{ID}^i)$.
            \item For each $j \in [b]$, if $\textit{ID}^1[j] \in I_{id}$, set $\Phi[j] = 0$, else sample $\Phi[j]$ uniformly from $\mathbb{Z}_{2^l}$.
            \item Sample $\share{\Phi}_i \leftarrow \mathbb{Z}_{2^l}^{b \times 1}$ satisfying $\Phi = \sum_{i=1}^n\share{\Phi}_i$.
            \item Return $\share{\Phi}_i$ to $P_i$ for $i \in [n]$.
        \end{enumerate}
        }
    }
    }
    \caption{Ideal functionality of circuit-based multi-party PSI}
    \label{funct:cmpsi}
\end{figure}

To illustrate the core concept, we first present our proposed cmPSI protocol in a ring-based communication structure. We then demonstrate how substantial performance gains can be achieved by optimizing the communication structure while maintaining security guarantees.

\subsubsection{cmPSI with Ring-based Communication Structure}

Holding $\textit{ID}^1$ or $\textit{ID}^i$ (with size $b$) processed by hashing, each pair $P_1$ and $P_i$ ($i \in [2, n]$) invokes an OPRF instance. Specifically, in each OPRF instance, $P_1$ is the receiver and gets $y^i_j = F_{k_i}(\textit{ID}^1[j])$ for $\forall j \in [b]$, while $P_i (i \ge 2)$ is the sender, gets the PRF key $k_i$ and computes PRF values $ U^i[j] = \{F_{k_i}(\textit{ID}^i[j][u])\}_{ \forall u \in [|\textit{ID}^i[j]|]}$ for $\forall j \in [b]$.
Subsequently, parties adopt a ring-based communication structure to iteratively transmit the OKVS tables from \(P_n\) to \(P_1\) while preserving privacy. Concretely, each party $P_i (i \ge 2)$ generates independent randomness $r^i_j \in \mathbb{Z}_{2^l}$ for $\forall j \in [b]$. The transmission begins with $P_n$, who encodes the key-value pairs $I^n = \{(\textit{ID}^n[j][u], U^n[j][u] + r_j^n)\}_{\forall j \in [b], \forall u \in [|\textit{ID}^n[j]|]}$ into an OKVS table $S^n = \textit{Encode}(I^n)$.
In our construction, each row of PRF values \(U^n[j]\) (for row \(j \in [b]\)) is masked by the same random value \(r_j^n\). Crucially, due to the OPRF interaction between \(P_1\) and \(P_n\), party \(P_1\) can learn at most one PRF output per row $j$, if and only if \(\textit{ID}^1[j]\) matches \(\mathit{id}^n_{j,u} \in \textit{ID}^n[j]\). Consequently,  values $\{U^n[j][u] + r_j^n)\}_{j \in [b],u \in [|U^n[j]|]}$ are indistinguishable from uniformly random. Furthermore, under the obliviousness guarantees of the OKVS, no other party gains additional information from $S^n = \textit{Encode}(I^n)$. Thus \( P_n \) can sends \( S^n \) to \( P_{n-1} \) with privacy preservation.
After receiving $S^n$, $P_{n-1}$ decodes $S^n$ to retrieve $V^{n-1}[j] = \{ v^i_{j,u} = \textit{Decode}(S^n, \textit{ID}^{n-1}[j][u]) \}_{\forall u \in [|\textit{ID}^{n-1}[j]|]}$ for $\forall j \in [b]$, then constructs new key-value pairs $I^{n-1} = \{(\textit{ID}^{n-1}[j][u], U^{n-1}[j][u] + V^{n-1}[j][u] + r_j^{n-1})\}_{\forall j \in [b], \forall u \in [|\textit{ID}^{n-1}[j]|]}$. 
$P^{n-1}$ then encodes new key-value pairs into the OKVS table $S^{n-1} = \textit{Encode}(I^{n-1})$ and sends $S^{n-1}$ to the next party $P_{n-2}$.
By the same obliviousness property, other parties cannot infer additional information from $S^{n-1}$.
This process is repeated iteratively from $P_{n-1}$ to $P_2$, with each $P_i$ ($i \ge 2$) receiving an OKVS table $S^{i+1}$, decoding it to retrieve the intermediate results $V^i$, constructing new key-value pairs $I^i = \{(\textit{ID}^i[j][u], U^i[j][u] + V^i[j][u] + r_j^i)\}_{\forall j \in [b], \forall u \in [|\textit{ID}^i[j]|]}$, encoding the OKVS table $S^i = \textit{Encode}(I^i)$ and sending $S^i$ to the next party $P_{i-1}$. 
Finally, upon receiving $S^2$ from $P_2$, $P_1$ decodes $S^2$ to retrieve $V^1 = \{v^1_j = \textit{Decode}(S^2, \textit{ID}^1[j])\}_{\forall j \in [b]}$ and computes its output $\share{\Phi_{b\times 1}}_1$ by subtracting $V^1$ from the sum of all received $y^i_j$ values, that is $\share{\Phi}_1=(\sum_{i=2}^{n}{y_1^i}-V^1[1],\ldots,\sum_{i=2}^{n}{y_b^i}-V^1[b])$. Each other party $P_i (\forall i \ge 2)$ sets its output $\share{\Phi}_i = (r_1^i, r_2^i, \ldots, r_b^i)$.

The key security points of the cmPSI protocol are summarized as follows: 
(1) using independent random masks per bin for secret sharing: for each $j$ ($j \in [b]$), an independent random mask \(r^i_j\) is introduced into the key-value pairs \(I^i\) used by \(P_i\) (\(i \ge 2\)) when encoding the OKVS table. This ensures that each party obtains secret-shared flags indicating intersection elements without exposing the underlying IDs;
(2) storing transformed IDs ${id\|j}$ in $\textit{ID}^i$ and using OPRF on $\textit{ID}^i$ to prevent collusion-based brute force attacks: adding the PRF value of $id\|j$ within the values of key-value pairs $I^i$, and $P_1$ invokes OPRF with each other party $P_i$ ($i \ge 2$) to remove the PRF values of intersection IDs, preventing colluding parties from inferring other parties' ID values. For example, if IDs store the original IDs instead of \( id \| j \), consider parties \(P_1\), \(P_2\), and \(P_3\), where \(P_1\) holds IDs \(a, e\), \(P_2\) holds ID \(a\), and \(P_3\) holds IDs \(a, e\), with IDs \(a\) and \(e\) hashed into the same bin for \(P_3\).
If $P_1$ and $P_2$ collude, they can infer whether $P_3$ has $e$ by verifying that $\textit{Decode}(S^3, a) - F_{k_3}(a)$ equals $\textit{Decode}(S^3, e) - F_{k_3}(e)$. However, by using OPRF on the concatenated value $id\|j$, the above attack is effectively prevented, since each hash table bin of $P_1$ has only one element, $P_1$ obtains $F_{k_3}(a \| j)$ and $F_{k_3}(e \| j')$ for $j' \neq j$ but does not obtain $F_{k_3}(e\|j)$.

\begin{small}
\begin{protocol}{$\Pi_{\rm{cmPSI }}$}{cmpsi}
\smallskip
\textbf{Parameters:} The number of parties $n$, the size of hash table $b$, the bit  length of element $l$, the security parameter $\kappa$ and the statistical security parameter $\lambda$. An OKVS scheme $\textit{Encode}(\cdot)$, $\textit{Decode}(\cdot,\cdot)$. An OPRF functionality $\mathcal{F}_{\textit{oprf}}$. A PRF $F_{(\cdot)}(\cdot)$.\\
\textbf{Inputs:} $P_i$ holds $\textit{ID}^i$ for $i \in [n]$.\\
\textbf{Outputs:} $P_i$ obtains $\share{\Phi}_i$, such that if 
$ \textit{ID}^1[j] \in  I_{id} (\forall j \in [b])$,
$\Phi[j]$ is zero; otherwise $\Phi[j]$ is randomness.\\
\textbf{Offline:} $P_i (i \ge 2)$ samples uniformly random and independent values $(r_1^i,r_2^i,\dots,r_b^i) \in \mathbb{Z}_{2^l}$. 

\textbf{Online:}
\begin{enumerate}[label=\arabic{enumi}.,leftmargin=1.2em,itemsep=0pt, topsep=0pt, partopsep=0pt, parsep=0pt]
    \item \textbf{Performing OPRF:}\
    \begin{enumerate}[label=\arabic{enumii}),leftmargin=1.2em,itemsep=0pt, topsep=0pt, partopsep=0pt, parsep=0pt]
    \item For $i \in [2,n]$, each $(P_1, P_i)$ pair invokes an instance of $\mathcal{F}_{\textit{OPRF}}$, where:\
        \begin{itemize}[leftmargin=1em,itemsep=0pt, topsep=0pt, partopsep=0pt, parsep=0pt]
            \item[-] $P_i (i \ge 2)$ is the sender and gets PRF key $k_i$.
            \item[-] $P_1$ is the receiver, inputs $\textit{ID}^1$ and receives $Y^i = (y_1^i,\cdots,y_b^i)$ where $y_j^i = F_{k_i}(\textit{ID}^1[j])$ for $\forall j \in [b]$.
        \end{itemize}    
    \item For $i \in [2,n]$, each $P_i$ computes $U^i[j] = \{ F_{k_i}(\textit{ID}^i[j][u])\}_{\forall u \in [|\textit{ID}^i[j]|]}$ for $\forall j \in [b]$.
    \end{enumerate}
    
    \item \textbf{Transmiting OKVS tables:} 
\begin{enumerate}[label=\arabic{enumii}),leftmargin=1.2em,itemsep=0pt, topsep=0pt, partopsep=0pt, parsep=0pt]
    \item Each $P_i (i \in [n])$ uses the optimized communication structure generation algorithm\ref{alg:opt} to obtain the set of child parties $P_i.children$ and its parent party $P_i.parent$  (the child party sends the OKVS table to the parent party).
    \item  For $i = n$ to $2$: 
    \begin{enumerate}[leftmargin=1em,itemsep=0pt, topsep=0pt, partopsep=0pt, parsep=0pt]
        \item If $P_i.children \neq \emptyset$, $P_i$ receives $\{S^{i^*}\}_{\forall  i^* \in P_i.children } $ from all child parties and gets $ V^{i,i^*} =\{ v^{i,i^*}_{j,u} = \textit{Decode}(S^{i^*},\textit{ID}^i[j][u])\}_{\forall j \in [b], \forall u \in [|\textit{ID}^i[j]|]} $ for $\forall i^* \in P_i.children$. 
        \item If $i \ge 2$, $P_i$ sets $I^i = \{(\textit{ID}^i[j][u], \sum_{{i^*} \in P_i.children} V^{i,i^*}[j][u] + U^i[j][u] + r^i_j )\}_{\forall j \in [b], \forall u \in [|\textit{ID}^i[j]|]}$.  $P_i$ builds OKVS table $S^i = \textit{Encode}(I^i)$ and sends $S^i$ to $P_i.parent$.
    \end{enumerate}
\end{enumerate}
    \item \textbf{Setting outputs:}\
    \begin{itemize}[leftmargin=0em,itemsep=0pt, topsep=0pt, partopsep=0pt, parsep=0pt]
        \item[-] $P_1$ sets $\share{\Phi}_1 = (\sum_{i=2}^{n} y_1^i - V^1[1], \dots, \sum_{i=2}^{n} y_b^i - V^1[b])$.
        \item[-] $P_i (i \ge 2)$ sets $\share{\Phi}_i = (r_1^i, r_2^i, \dots, r_b^i)$.
    \end{itemize}
\end{enumerate}

\end{protocol}
\end{small}

\subsubsection{cmPSI with Optimized Communication Structure} Based on our earlier security analysis, an adversary cannot reconstruct the original IDs from an encoded OKVS table via brute-force attacks. Therefore, it is secure to transmit OKVS tables to any party, enabling efficient parallel transmissions that ultimately converge at $P_1$.

As is shown in Figure~\ref{fig:cmpsi-structure}, a scenario for cmPSI involving 5 parties.
The leftmost diagram in Figure~\ref{fig:cmpsi-structure} illustrates a ring-based communication structure for transmitting OKVS tables in our proposed cmPSI protocol. 
We denote the transmission time for an OKVS table (excluding latency) as $t_s$ and the network latency as $t_l$.
The total time required to transmit all the OKVS tables via a ring-based structure is $4(t_s + t_l)$, assuming that the local encoding and decoding times are negligible. 
The middle part of Figure~\ref{fig:cmpsi-structure} illustrates the simplest star-based parallel structure, in which every party $P_i(i \ge 2)$ sends its OKVS table $S^i$ to $P_1$. The total time required to transmit all the OKVS tables via the star-based structure is $4t_s + t_l$, which is $(n-2)t_l$ less than that of the ring-based structure.
We notice that if $P_1$ first receives $P_4$'s OKVS table, then parties $P_2$,$P_3$, and $P_5$ remain idle. As is shown in the rightmost part of Figure~\ref{fig:cmpsi-structure}, this idle time can be effectively leveraged by having $P_5$ send its OKVS table to $P_2$. $P_2$ subsequently decodes the OKVS table $S^5$ to retrieve $V^{2,5}[j] = \{v^{2,5}_{j,u} = \textit{Decode}(S^5, U^2[j][u])\}_{\forall u \in [|U^2[j]|]}$ for $\forall j \in [b]$, and add the retrieved values into values of its own key-value pairs, constructing new pairs $I^{2} =\{ (\textit{ID}^2[j][u], U^2[j][u] + V^{2,5}[j][u] + r^2_j)\}_{\forall j \in [b], \forall u \in [|\textit{ID}^2[j]|]}$. $P_2$ encodes the OKVS table $S^{2} = \textit{Encode}(I^{2})$ and sends $S^2$ to party $P_1$. 
As all parties possess datasets of equal size, $P_1$ and $P_2$ simultaneously receive the OKVS table from $P_4$ and $P_5$, respectively. Subsequently, $P_1$ receives the OKVS tables $S^3$ and $S^2$ from $P_3$ and $P_2$, respectively. The time required to transmit all the OKVS tables via this optimized structure is $3t_s+t_l$, which is $t_s$ less than that of the star-based structure.

\begin{figure}[htbp]
    \centering
    \captionsetup[subfigure]{labelfont={scriptsize}, textfont={scriptsize}}
    \subfloat[\scriptsize Ring-based structure\label{fig:ring-comm}]{%
        \includegraphics[width=0.14\textwidth]{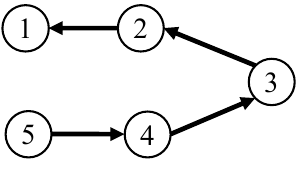}%
    }
    \hspace{1.6em}
    \subfloat[\scriptsize Star-based structure\label{fig:star-comm}]{%
        \includegraphics[width=0.14\textwidth]{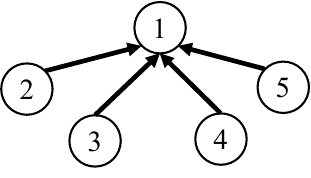}%
    }
    \hspace{1.6em}
    \subfloat[\scriptsize Optimized structure\label{fig:opt-comm}]{%
        \includegraphics[width=0.13\textwidth]{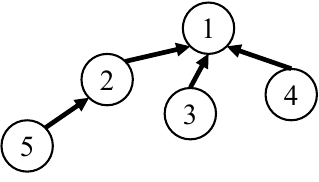}%
    }
    \caption{Comparison of different communication structures for transmitting OKVS tables. Each circle denotes a party and the number $i$ in the circle denotes $P_i$ ($i \in [n]$). Arrow indicates the transmission direction of the OKVS table. Each party must finish receiving the OKVS tables before sending. 
    }
    \label{fig:cmpsi-structure}
\end{figure}

We propose an optimized communication structure generation algorithm (detailed in Algorithm~\ref{alg:optstruct} in the Appendix) that takes as input the number of parties $n$, the number of designated root party $leader\_num$ ($leader\_num$ is 1 in our proposed cmPSI protocol), the time required to send one OKVS table $t_s$, and the network delay $t_l$. It outputs the root node of the optimized communication structure, where each node represents a party and contains its identifier, parent, and children. Each party must finish receiving all the OKVS tables from its child party before it sends the OKVS table to the parent party. The algorithm follows a greedy approach to maximize communication utilization while minimizing the latency for each party.
As is shown in Protocol~\ref{pro:cmpsi}, we construct the final cmPSI protocol by using this optimized communication structure to efficiently transmit the OKVS tables in step 2.

\begin{theorem}
The protocol $\Pi_{\rm {cmPSI}}$ (Protocol ~\ref{pro:cmpsi}) securely realizes  $\mathcal{F}_{\textit{cmPSI}}$ (Figure \ref{funct:cmpsi}) in the random oracle and $\mathcal{F}_{\textit{OPRF}}$-hybrid model, against semi-honest adversary $\mathcal{A}$ who can corrupt $n-1$ parties.
\end{theorem}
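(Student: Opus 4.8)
\section*{Proof Proposal}

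The plan is to argue security in the simulation paradigm against a semi-honest adversary corrupting a set $C$ of parties with $|C|\le n-1$. For each such $C$ I would construct a simulator $\text{Sim}_C$ that, given only the corrupted parties' inputs $\{\textit{ID}^i\}_{i\in C}$ and their ideal outputs $\{\share{\Phi}_i\}_{i\in C}$, produces a transcript computationally indistinguishable from the real joint view and output. The crucial structural fact is that since at most $n-1$ parties are corrupted, there is always at least one honest party $P_h$; its secret randomness (the masks $(r^h_1,\dots,r^h_b)$ and its OKVS encoding $S^h$) is the entropy source that lets $\text{Sim}_C$ replace honest messages with random-looking ones. I would split the analysis by whether $P_1\in C$, because $P_1$ is the distinguished OPRF receiver and the party that reconstructs $V^1$, so it is the only corrupted party whose simulated view must be made \emph{consistent} with the ideal shares.

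First I would simulate the building blocks. In the $\mathcal{F}_{\textit{OPRF}}$-hybrid model the OPRF calls are ideal, so a corrupted sender $P_i$ ($i\ge 2$) simply receives a uniformly random key $k_i$ (a perfect simulation matching the real functionality), and a corrupted receiver $P_1$ (paired with an honest sender $P_i$) receives outputs $y^i_j$ that $\text{Sim}_C$ samples uniformly, indistinguishable because $P_1$ never learns $k_i$. For the OKVS transmission, every table $S^{i^*}$ sent by an \emph{honest} child to a corrupted parent is replaced by a uniformly random element of $\mathcal{V}^{m'}$: by the doubly-oblivious property the honest values are masked by the honest randomness $r^{i^*}_j$ and are therefore uniform, while tables produced by \emph{corrupted} children are generated by running the protocol honestly on the simulated inputs and keys. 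Here I would invoke the collusion argument already given in the text: because parties store and encode the concatenations $id\|j$ and apply the OPRF to them, no coalition can invert a simulated random table to detect that it was not honestly formed.

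The main obstacle is the case $P_1\in C$ under dishonest majority, where the reconstructed flags must match the ideal output. Since $\text{Sim}_C$ learns $\share{\Phi}_1=(\sum_{i\ge 2}y^i_j-V^1[j])_j$ from the functionality, it would fix the simulated OPRF outputs $y^i_j$ at random and then define the required decodings by $V^1[j]=\sum_{i=2}^n y^i_j-\share{\Phi}_1[j]$. It remains to exhibit the honest table whose decoding the adversary performs so that, after propagating through any corrupted intermediaries in the optimized structure, $P_1$ indeed recovers these $V^1[j]$. I would accomplish this by programming the random oracle underlying the OKVS so that $\textit{Decode}$ on each of the at most $b$ distinct keys $\textit{ID}^1[j]$ queried per table yields the prescribed value; the programmability together with the random-decoding property guarantees that this is indistinguishable from a genuine encoding and that, at non-intersection bins where no honest input key matches, the decoded contribution is uniformly random. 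The delicate point is global consistency: the same honest table feeds every telescoping path to $P_1$, so I must verify that one consistent programming simultaneously satisfies all of $P_1$'s constraints and the shares of any other corrupted party, which is where the bulk of the care lies.

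Finally I would close with a correctness check and a hybrid argument. For correctness, I would verify the telescoping identity directly: when $\textit{ID}^1[j]\in I_{id}$ the accumulated PRF masks cancel against $\sum_{i\ge 2}y^i_j$ and the added randomness $\sum_{i\ge2}r^i_j$ cancels, giving $\Phi[j]=0$; when $\textit{ID}^1[j]\notin I_{id}$, the random-decoding property at some honest party's OKVS injects a uniform term, so $\Phi[j]$ is random, exactly matching $\mathcal{F}_{\textit{cmPSI}}$. The passage from the real world to $\text{Sim}_C$ then follows by a short sequence of hybrids (replacing honest OPRF outputs, then honest OKVS tables, then installing the programmed decodings), each step justified respectively by the OPRF ideal functionality, double-obliviousness, and the random-oracle/random-decoding properties, yielding the claimed indistinguishability of views and outputs.
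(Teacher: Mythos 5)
Your proposal is correct and follows essentially the same route as the paper's proof: a case split on whether $P_1$ is corrupted, simulation of the OPRF calls inside the $\mathcal{F}_{\textit{OPRF}}$-hybrid, replacement of honest parties' OKVS tables by uniformly random strings (justified by the per-bin masks $r^i_j$, the unlearned PRF values, and the double-obliviousness/random-decoding properties), all tied together by a short hybrid argument. The one place you go beyond the paper is the explicit consistency step forcing $P_1$'s decoded values to satisfy $V^1[j]=\sum_{i\ge 2}y^i_j-\share{\Phi}_1[j]$ so that the simulated view reproduces the ideal output share (and your general treatment of honest-child-to-corrupted-parent tables in the optimized tree); the paper's simulator just emits a uniformly random table and leaves this joint view--output consistency implicit, so your extra care is a refinement of the same argument rather than a different approach.
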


\begin{proof} 
We exhibit simulators $\textit{Sim}_C$ for simulating the view of the corrupt parties including $P_1$ and excluding $P_1$, respectively, and prove that the simulated view is indistinguishable from the real one via standard hybrid arguments. Let $C$ be the set of corrupted $n-1$ parties. 

\noindent \underline{\textbf{Case 1 ($P_1 \in C$).}} 
The $\textit{Sim}_C$ samples randomness $y'_j \leftarrow \mathbb{Z}_{2^l}$ for $\forall j \in [b]$ and invokes the OPRF receiver’s simulator $\textit{Sim}_{\textit{OPRF}}^R(Y^i,Y'^i)$, where $Y'^i= \{ y_1',\ldots,y_b' \} $ and sends $Y'^i$ to $P_1$ on behalf of $P_i (\notin C)$. 
If $P_i (\notin C$) has child parties, $\textit{Sim}_C$ receives the OKVS table from $P_i$'s child parties. 
Next, $\textit{Sim}_C$ samples uniformly random OKVS table $S'^{i} \leftarrow \mathbb{Z}_{2^l}^{m'\times 1}$ where $m'$ is the size of the OKVS table when encoding $h\cdot m$ elements. And $\textit{Sim}_C$ sends $S'^i$ to $P_i (\notin C)$'s parent party.
We argue that the view output by $\textit{Sim}_C$ is indistinguishable from the real one. First, we define three hybrid transcripts $T_0$,$T_1$,$T_2$, where $T_0$ is the real view of $C$, and $T_2$ is the output of $\textit{Sim}_C$.
\begin{enumerate}[label=\arabic{enumi}.,leftmargin=1em,itemsep=0pt, topsep=0pt, partopsep=0pt, parsep=0pt]
    \item $\textit{Hybrid}_0$.The first hybrid is the real interaction in Protocol~\ref{pro:cmpsi}. Here, an honest party $P_i$ uses real inputs and interacts with corrupt parties $C$. Let $T_0$ denote the real view of $C$.
    \item $\textit{Hybrid}_1$. Let $T_1$ be the same as $T_0$, except that the OPRF execution is replaced by the OPRF receiver’s simulator. $\textit{Sim}_{\textit{OPRF}}^R(Y^i, Y'^i)$, where $Y'^i$ has $b$ elements $\{ y_1',\ldots,y_b' \}$. The simulator security of OPRF guarantees that this view is indistinguishable from $T_0$.
    \item $\textit{Hybrid}_2$. Let $T_2$ be the same as $T_1$, except that the OKVS table $S'^i$ for $P_i (\notin C)$ is sampled uniformly from $\leftarrow \mathbb{Z}_{2^l}^{m' \times 1}$. 
    In real execution, $P_i (\notin C)$ sets $I^i = \{(\textit{ID}^i[j][u], \sum_{P_{i^*} \in P_i.children} V^{i^*}[j][u] + U^i[j][u] + r^i_j )\}_{\forall j \in [b], \forall u \in [|\textit{ID}^i[j]|]}$ and builds the OKVS table $S^i = \textit{Encode}(I^i)$.  If $x = \textit{ID}^i[j][u]$ is not the intersection element of $P_1$ and $P_i (\notin C)$, the value $I^i(x)$ is uniformly random  from $\mathbb{Z}_{2^l}$ because it contains the PRF value $U^i[j][u] = F_{k_i}(x)$. If $x$ is the intersection element of $P_1$ and $P_i$, $P_1$ holds $U^i[j][u] = F_{k_i}(x)$, but $r_j^i$ is a uniform sample by $P_i$, so the values $I^i(x)$ are still uniformly at random from $\mathbb{Z}_{2^l}$. Owing to the double obliviousness and random decoding property of the OKVS, the simulated $S'^i$ has the same distribution as $S$ in the real protocol. Hence, $T_1$ and $T_2$ are statistically indistinguishable. This hybrid is exactly the view output by the simulator.
\end{enumerate}

\noindent \underline{\textbf{Case 2 ($P_1 \notin C$).}} 
For each $ P_i (i \ge 2)$, $\textit{Sim}_C$ generates random key $k_i$. It then invokes the OPRF sender’s simulator $\textit{Sim}_{\textit{OPRF}}^S(k_i)$ and appends the output to the view. $\textit{Sim}_C$ receives the OKVS table from $P_1$'s children parties on behalf of $P_1$.
We argue that the view output by $\textit{Sim}_C$ is indistinguishable from the real one. We first define two hybrid transcripts $T_0$,$T_1$, where $T_0$ is the real view of $C$, and $T_1$ is the output of $\textit{Sim}_C$.
\begin{enumerate}[label=\arabic{enumi}.,leftmargin=1em,itemsep=0pt, topsep=0pt, partopsep=0pt, parsep=0pt]
    \item $\textit{Hybrid}_0$.The first hybrid is the real interaction in Protocol~\ref{pro:cmpsi}. Here, an honest party $P_i$ uses real inputs and interacts with the corrupt parties $C$. Let $T_0$ denote the real view of $C$.
    \item $\textit{Hybrid}_1$. Let $T_1$ be the same as $T_0$, except that the OPPF execution is replaced by the OPRF sender’s simulator $\textit{Sim}_{\textit{OPRF}}^S(k_i)$ for every $i \in [2,n]$. The simulator security of OPRF guarantees this view is indistinguishable from $T_0$. This hybrid is exactly the view output by the $\textit{Sim}_C$.
\end{enumerate}

\end{proof}

\subsection{Secure Feature Alignment}

\subsubsection{Definition}
We formally define secure feature alignment functionality $\mathcal{F}_{\textit{smFA}}$ in Figure~\ref{funct:sfm}. Using the secure feature alignment protocol, each party $P_i (i \in [n])$ holds $(\share{\Phi}_i, \textit{ID}^i, F^i)$ and obtains secret-shared and joint training dataset $\share{\mathcal{D}}_i$ consisting of aligned features.

\begin{figure}[htbp]
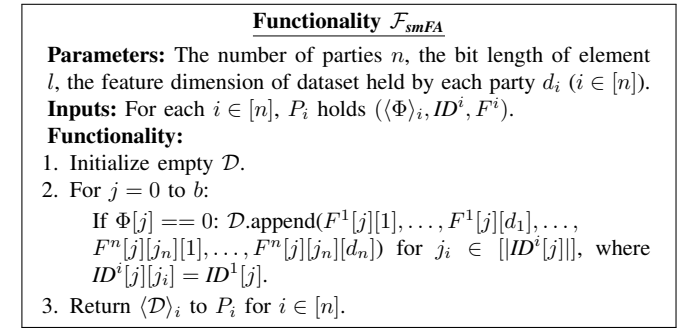

    \centering
    \fbox{
    \scalebox{0.85}{
        \parbox{0.52\textwidth}{
        {\centering \textbf{\underline{Functionality $\mathcal{F}_{\textit{smFA}}$}}\par}
        \smallskip
        \textbf{Parameters:} The number of parties $n$, the bit length of element $l$, the feature dimension of dataset held by each party $d_i$ ($i \in [n]$).

        \textbf{Inputs:} For each $i \in [n]$, $P_i$ holds $(\share{\Phi}_i,\textit{ID}^i, F^i)$.

        \textbf{Functionality:} 
        \begin{enumerate}[label=\arabic{enumi}.,leftmargin=1em,itemsep=0pt, topsep=0pt, partopsep=0pt, parsep=0pt]
        \item Initialize empty $\mathcal{D}$. 
        \item For $j=0$ to $b$:
            \begin{itemize}[label={}]
            \item If $\Phi[j] == 0$: $\mathcal{D}$.append($F^1[j][1], \ldots, F^1[j][d_1], \ldots , \\ F^n[j][j_n][1] , \ldots, F^n[j][j_n][d_n]$) for $j_i \in [|\textit{ID}^i[j]|]$, where $\textit{ID}^i[j][j_i] = \textit{ID}^1[j]$.
            \end{itemize}
        \item Return $\share{\mathcal{D}}_i$ to $P_i$ for $i \in [n]$.
        \end{enumerate}

        }
    }
    }
    \caption{Ideal functionality of secure multi-party feature alignment}
    \label{funct:sfm}
\end{figure}

\subsubsection{Construction}
Protecting intersection IDs in \SecVPre introduces a significant challenge because without revealing which IDs are in the intersection, parties cannot directly determine which features need to be aligned or how they should be ordered. To address this issue, the smFA protocol has two milestones. First, parties use OKVS and secret sharing to obtain a secret-shared dataset consisting of aligned features and redundant data. Second, parties use a secure multi-party shuffle to remove redundant data corresponding to non-intersection IDs, thus yielding a secret-shared and joint dataset consisting of aligned features.
The smFA protocol $\Pi_{\rm smFA}$ is described formally in Protocol~\ref{pro:fa}.

The first milestone corresponds to steps 1–3 of \(\Pi_{\rm {smFA}}\) (Protocol~\ref{pro:fa}). In the offline phase, each party \( P_i \) (\( i \geq 2 \)) samples a matrix of uniformly random and independent values \( R^i_{b \times d_i} = [r_{j,k}^i]_{\forall j \in [b], \forall k \in [d_i], r_{j,k}^i \in \mathbb{Z}_{2^l} }\).
In the online phase,  $P_1$ secret shares its feature matrix $F^1$ with all other parties $P_i$ ($i \ge 2$) (step 1). Next,
each \( P_i \) (\( i \in [2,n] \)) secret shares \( R^i \) among all parties except $P_1$, denoted as \( \share{R^i} = \mathit{Shr}(R^i, i, [n] \setminus \{1\}) \) (step 2-(1)). Subsequently, each party \(P_i\)(\(i\ge2\)) constructs an OKVS table \(S_f^i = \textit{Encode}(I^i_f)\), where \(\ I^i_f = \{ (U^i[j][u],H_o(U^i[j][u]) \oplus ((F^i[j][u][1] - r^i_{j,1})\|\ldots\|(F^i[j][u][d_i] - r^i_{j,d_i})) )\}_{\forall j\in[b], \forall u \in [|\textit{ID}^i[j]|]}\) (step 2-(2)). Here, \(U^i=F_{k_i}(\textit{ID}^i)\) is $P_i$'s PRF value (obtained from \(\Pi_{\rm {cmPSI}}\)), and $H_o:\{0,1\}^\kappa \rightarrow \{0,1\}^{*}$ is public random oracles whose output bit length `$*$' is $l\cdot d_i$. Each $P_i (i \ge 2)$ sends the OKVS table $S_f^i$ to \(P_1\) (step 2-(2)). After receiving $S_f^i$, $P_1$ encodes $S_f^i$ using the OPRF values $Y^i$ (obtained from $\Pi_{\rm {cmPSI}}$) to retrieve $V^i_f = \{ v^i_j = \textit{Decode}(S^i_f,Y^i[j])\}_{\forall j \in [b]}$ (step 2-(3)). \(P_1\) XORs these result with \(H_o(Y^i)\) to derive masked features $V^i_f = V^i_f \oplus H_o(Y^i)$. 
In step 3, \( P_1 \) sets \(\share{Z}_1 = (\share{F^1}_1, V^2_f, V^3_f, \ldots, V^n_f)\), while each \( P_i \) (\( i > 2 \)) sets \( \share{Z}_i = (\share{F^1}_i, \share{R^2}_i, \ldots, \share{R^n}_i)\).
Through these steps, the parties collectively obtain a secret-shared dataset \(\share{Z}\) that includes aligned features and redundant data.

\begin{figure}[htbp]
    \centering
    \includegraphics[width=0.48\textwidth]{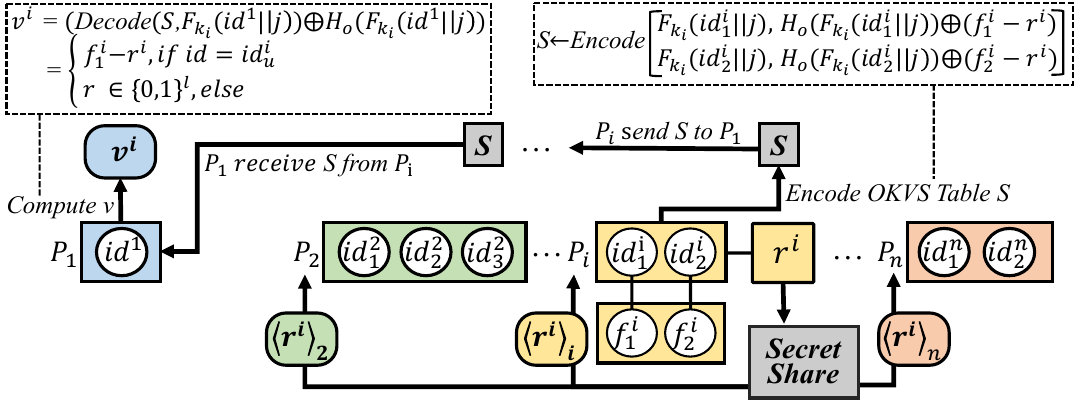}
    \caption{Examples of sharing features of party $P_i (i \ge 2)$ in $j$-th bin (index $j$ omitted for simplicity) with privacy-preserving. $P_i$'s bin contains two ids $id^i_1,id^i_2$ and the corresponding features are $f^i_1,f^i_2$. $P_i$ samples a random value $r^i$ locally and secretly shares $r^i$ to $P_j (\forall j \in [2,n])$, who each obtain share $\share{r^i}_j$. 
    $P_i$ sends the OKVS table $S$ to $P_1$, who decodes $S$ and computes $v^i$. The dashed boxes highlight computations corresponding directly to the adjacent text, and ellipses indicate outputs held by each party.}
    \label{fig:fa_instance}
\end{figure}

To analyze the correctness of the above steps, consider the \(j\)-th bin in Figure \ref{fig:fa_instance}. For simplicity, we omit the index \( j \). Party \( P_i \)'s dataset has a single feature dimension, containing two ID values $id^i_1, id^i_2$ in the \( j \)-th bin. 
If $id^1 = id^i_u$ for some $u \in [|\textit{ID}^i[j]|]$, $P_1$ computes
\begin{align}
v^i &= (\textit{Decode}(S,F_{k_i}(id^1\|j)) \oplus (H_o(F_{k_i}(id^1)) \nonumber
\\ &= (\textit{Decode}(S,F_{k_i}(id^i_u\|j)) \oplus (H_o(F_{k_i}(id^i_u)) \nonumber
\\ &= H_o(F_{k_i}(id^i_u))\oplus(f^i_u - r^i) \oplus (H_o(F_{k_i}(id^i_u)) \nonumber
\\ &= f^i_u - r^i
\end{align}
Adding the shares \(\share{r^i}\)(held by \(P_2 ,\ldots, P_n\)) to \(v^i\) (held by \(P_1\)) reconstructs \(f^i_u\), meaning that all parties hold a share $\share{z}$\footnote{Here, $v^i$ is $V^i_f[j]$, $z$ is $Z[j][\sum_{i'=1}^{i-1}d_{i'}]$, $r^i$ is $r^i_j$, $id^i$ is $id^i_j$, as we omit $j$ for simplicity.} of \(P_i\)’s feature $f^i_u$ for intersection ID.
If \(id^1 \neq id^i_u\), the decoded output is a random value. Adding shares \(\share{r^i}\) (held by \(P_2, \ldots, P_n\)) to \(v^i\) reconstructs a random value, implying that each party obtains a secret-shared random value $\share{z}$. Thus, all parties obtain $\share{Z_{b \times d}}_i$, where if $\textit{ID}^1[j] \in I$, the $j$-th row of $\share{Z}$ consists of secret-shared aligned features from all parties corresponding to this ID; otherwise, redundant data.

Regarding security, note that for the key-value pairs \(I^i_f\) encoded into $P_i$'s OKVS table $S^i_f$ ($i \ge 2$), party $P_i$ set \( U^i = F_{k_i}(id^i_u) \) as the keys of \( I^i_f \). And $P_i$ mask each feature value using a random number and \( H_o(U^i) \), thus forming the values $\{H_o(U^i[j][u]) \oplus ((F^i[j][u][1] - r^i_{j,1})\|\ldots\|(F^i[j][u][d_i] - r^i_{j,d_i}))\}_{\forall j\in[b], \forall u \in [|\textit{ID}^i[j]|]}$ of \( I^i_f \).  
This design prevents \( P_1 \) from inferring whether any decoded entry from $S^i_f$ is a random value or a masked feature. Moreover, since $P_i$ uses OKVS to encode the entire dataset $F^i$ across the simple hash table, $S^i_f$ does not reveal the number of elements in each hash bin.

For the second milestone for removing redundant data in step 4 of $\Pi_{\rm smFA}$, parties employ our proposed secure shuffling protocol (Protocol \ref{pro:secshuffle}) on  $\share{Z}\|\share{\Phi}$, reconstruct shuffled flags $\share{\hat{\Phi}}$ into plaintext $\hat{\Phi}$, and then remove redundant data from $\share{Z}$ if the corresponding entries in $\hat{\Phi}$ are randomness.

\begin{small}
\begin{protocol}{$\Pi_{\rm{smShuffle}}$}{secshuffle}

        \smallskip

        \textbf{Inputs:} $P_i$ holds a secret-shared matrix $\share{X}_i$ with size $b \times d$.

        \textbf{Outputs:} $P_i$ obtains a shuffled secret-shared matrix $\share{X'}_i$, such that ${X'} = \pi({X})$. where $\pi: [b] \to [b]$ is an unknown permutation to all parties.


        \textbf{Offline}: 
        $P_i (\forall i \in [n])$ samples uniformly random independent values $R^{i,i'} = [ r^{i,i'}_{j,u}]_{ \forall j \in [b], \forall u \in [d], r^{i,i'}_{j,u} \in \mathbb{Z}_{2^l}}$ for $\forall i' \in [n] \setminus \{i\}$) and a random permutation $\pi^{i}:[b] \rightarrow [b]$. Each pair of $P_i$ and $P_{i'}$($\forall i, i' \in [n], i \neq i'$) invokes $\Pi_{\rm {O-Shuffle}}$:
            \begin{itemize}[leftmargin=1em,itemsep=0pt, topsep=0pt, partopsep=0pt, parsep=0pt]
            \item[-] $P_{i}$ inputs $\pi^i$ and obtains $\share{\hat{R}^{i',i}}_{i}$.
            \item[-] $P_{i'}$ inputs $R^{i',i}$ and obtains $\share{\hat{R}^{i',i}}_{i'}$.
            \end{itemize}

        \textbf{Online}: 
        \begin{enumerate}[label=\arabic{enumi}.,leftmargin=1.2em,itemsep=0pt, topsep=0pt, partopsep=0pt, parsep=0pt]
        \item $P_i (\forall i \in [n])$ sets $\share{\hat{X}}_i$ = $\share{X}_i$.
        \item For $i = 1$ to $n$: 

        \begin{enumerate}[label=\arabic{enumii}),leftmargin=1.2em,itemsep=0pt, topsep=0pt, partopsep=0pt, parsep=0pt]
            \item $P_{i'}$ ($\forall i' \in [n] \setminus \{i\}$) computes $W^{i',i} = \share{\hat{X}}_{i'} -  R^{i',i}$ and sends $W^{i',i}$ to $P_i$. $P_{i'}$ updates $\share{\hat{X}}_{i'} = \share{\hat{R}^{i',i}}_{i'}$.
            \item Upon receiving all $W^{i',i}$ from $P_{i'}$ ($i' \in [n] \setminus \{i\}$), $P_i$ computes $W^i = \sum_{i' \in [n] \setminus \{i\}} W^{i',i} + \share{\hat{X}}_i$. 
            Then, $P _i$ applies its permutation $\pi^i$ to obtain $\hat{W}^i = \pi^i(W^i)$ and updates $\share{\hat{X}}_i = \hat{W}^ {i} + \sum_{i' \in [n] \setminus \{i\}}\share{\hat{R}^{i',i}}_i$.
        \end{enumerate}
        \end{enumerate}
\end{protocol}
\end{small}

In our proposed secure multi-party shuffle $\Pi_{\rm smShuffle}$ (Protocol \ref{pro:secshuffle}), each party $P_i (i \in [n])$ samples a private random permutation $\pi^i$ and generates random masks $R^{i,i'} (i' \in [n]\setminus \{i\})$ for each other party in the offline phase. Then, through a two-party protocol $\Pi_{\rm {O-Shuffle}}$, parties convert these masks into secret-shared, permuted forms $\share{\hat{R}^{i,i'}} (i' \in [n]\setminus \{i\})$. In the online phase, the parties iteratively update their shares in $n$ rounds. In each round, a designated party $P_i$ ($i \in [n]$) collects masked inputs $W^{i',i} = \share{\hat{X}}_{i'} -  R^{i',i}$ from all others $P_{i'} (i' \in [n] \setminus \{i\})$, applies its private permutation $\pi^i$ to the aggregated value $\hat{W}^i = \pi^i(W^i) = \pi^i( \sum_{i' \in [n] \setminus \{i\}} W^{i',i} + \share{\hat{X}}_i)$, and updates its share $\share{\hat{X}}_i = \hat{W}^{i} + \sum_{i' \in [n] \setminus \{i\}}\share{\hat{R}^{i',i}}_i$, while the others $P_{i'}$ update their shares $\share{\hat{X}}_{i'} = \share{\hat{R}^{i',i}}_{i'}$. After all rounds, the joint output is a secret-shared matrix, which is a randomly shuffled version of the original input, with the shuffle pattern and data remaining completely hidden from all the parties.

Since the secure shuffle disrupts the original order of flags $\hat{\Phi}$ and each plaintext entry in $\hat{\Phi}$ is either 0 or randomness, ensuring that no party can infer the original intersection IDs from $\hat{\Phi}$. Then, the parties can remove redundant data:
if \(\hat{\Phi}[j] = 0\), it indicates that the corresponding row $\share{Z[j]}$ belongs to an intersection ID, and thus \( \share{Z[j]} \) is appended in final secret-shared dataset $\share{\mathcal{D}}$.
Finally, each party $P_i$ obtains the secret-shared and joint dataset $\share{\mathcal{D}_{c \times d}}_i$ consisting of secret-shared and aligned features corresponding to the intersection IDs $I_{id}$.

\begin{small}
\begin{protocol}{$\Pi_{\rm{smFA}}$}{fa}
\smallskip
\textbf{Parameters:} The number of parties $n$, the size of input dataset $b$, the bit length of element $l$, and the feature dimension $d_i$ for party $P_i$($i \in [n]$), the security parameter $\kappa$ and the statistical security parameter $\lambda$. An OKVS scheme $\textit{Encode}(\cdot)$, $\textit{Decode}(\cdot,\cdot)$. Random oracles $H_o:\{0,1\}^\kappa \rightarrow \{0,1\}^{*}$.
\\
\textbf{Inputs:} $P_i$ holds $(\share{\Phi}_i, \textit{ID}^i, F^i)$ for $i \in [n]$. $P_1$ holds OPRF values $\{Y^i=F_{k_i}(\textit{ID}^1)\}_{\forall i \in [2,n]}$ and $P_i (i \ge 2)$ holds PRF values $U^i = F_{k_i}(\textit{ID}^i)$ from $\Pi_{\rm {cmPSI}}$.\\
\textbf{Outputs:} $P_i$ obtains $\share{\mathcal{D}_{c \times d}}_i$ consisting of aligned features.\\
\textbf{Offline:}\
    \begin{itemize}[leftmargin=1em,itemsep=0pt, topsep=0pt, partopsep=0pt, parsep=0pt]
        \item[-] Each $P_i (i \ge 2)$ uniformly and independently samples each element of matrix $R^i_{b \times d_i} = [r_{j,k}^i]_{\forall j \in [b], \forall k \in [d_i], r_{j,k}^i \in \mathbb{Z}_{2^l}}$.
        \item[-] $P_i (\forall i \in [n])$ executes offline operation in $\Pi_{\rm {smShuffle}}$. 
    \end{itemize}
\textbf{Online:}\
\begin{enumerate}[label=\arabic{enumi}.,leftmargin=1.2em,itemsep=0pt, topsep=0pt, partopsep=0pt, parsep=0pt]
    \item \textbf{Sharing features of $P_1$:} $\share{F^1} = \mathit{Shr}(F^1, 1, [n])$.
    \item \textbf{Sharing features of $P_i$($i \ge 2$):}\
    \begin{enumerate}[label=\arabic{enumii}),leftmargin=1.2em,itemsep=0pt, topsep=0pt, partopsep=0pt, parsep=0pt]
    \item For $i \in [2,n]$, $\share{R^i} = \mathit{Shr}(R^i, i, [n] \setminus \{1\})$.
    \item $P_i (i \ge 2)$ builds OKVS table $S^i_f = \textit{Encode}(I^i_f)$, where $I^i_f = \{ (U^i[j][u],H_o(U^i[j][u]) \oplus ((F^i[j][u][1] - r^i_{j,1})\|\ldots\|(F^i[j][u][d_i] - r^i_{j,d_1})) )\}_{\forall j\in[b], \forall u \in [|U^i[j]|]}$ and sends $S^i_f$ to $P_1$.
    \item After receiving $S^i_f$ from $P_i (i \ge 2)$, $P_1$ gets $V^i_f = \{v^i_{j} = \textit{Decode}(S^i_f,Y^i[j])\}_{\forall j \in [b]}$ and computes $V^i_f = V^i_f \oplus H_o(Y^i)$ for $\forall i \in [2,n]$.
    \end{enumerate}
    \item \textbf{Merging features:}\
    \begin{itemize}[leftmargin=1em,itemsep=0pt, topsep=0pt, partopsep=0pt, parsep=0pt]
        \item[-] $P_1$ sets $\share{Z}_1 = (\share{F^1}_1, V^2_f, V^3_f, \ldots, V^n_f)$.
        \item[-] For $i \ge 2$, $P_i$ sets $\share{Z}_i = (\share{F^1}_i, \share{R^2}_i, \ldots, \share{R^n}_i)$.
    \end{itemize}
    \item \textbf{Removing redundant data:}\
    \begin{enumerate}[label=\arabic{enumii}),leftmargin=1.2em,itemsep=0pt, topsep=0pt, partopsep=0pt, parsep=0pt]
    \item $\share{\hat{Z}} \| \share{\hat{\Phi}} = \Pi_{\rm {smShuffle}}(\share{Z} \| \share{\Phi})$.
    \item $\hat{\Phi} = \mathit{Rec}(\share{\hat{\Phi}})$.
    \item For $j = 0$ to $b$: If $\hat{\Phi}[j] == 0$: $\share{\mathcal{D}}$.append$(\share{\hat{Z}[j]})$.
    \end{enumerate}
\end{enumerate}
\end{protocol}
\end{small}

\begin{theorem}
The protocol $\Pi_{\rm{smFA}}$ (Protocol~\ref{pro:fa}) is securely realized $\mathcal{F}_{\textit{smFA}}$ in random oracle model, against a semi-honest adversary $\mathcal{A}$ who can corrupt $n-1$ parties.
\end{theorem}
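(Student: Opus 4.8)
The plan is to give a simulation-based proof in the $\mathcal{F}_{\textit{smShuffle}}$-hybrid and random-oracle model, mirroring the structure used for $\Pi_{\rm cmPSI}$. Let $C$ be the set of $n-1$ corrupted parties; the simulator $\textit{Sim}_C$ is given their inputs $\{(\share{\Phi}_i,\textit{ID}^i,F^i)\}_{i\in C}$ and the functionality output $\{\share{\mathcal{D}}_i\}_{i\in C}$, from which it reads off the intersection size $c=|\mathcal{D}|$ (the only quantity leaked beyond the shares). Since $P_1$ plays the asymmetric role of OKVS receiver and holder of the decoded feature shares, I would split into two cases according to whether $P_1\in C$, and in each case argue indistinguishability through a short chain of hybrids that replaces, one at a time, the honest-party messages by simulated ones.

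In Case 1 ($P_1\notin C$), the corrupted parties lie in $\{P_2,\dots,P_n\}$ and receive only: the shares $\share{F^1}_i$ dealt by the honest $P_1$ (step 1), the shares $\share{R^{i'}}_i$ dealt by honest parties (step 2-(1)), the shuffle outputs (step 4-(1)), and the honest parties' flag shares during $\mathit{Rec}$ (step 4-(2)). The first two are replaced by uniform elements of $\mathbb{Z}_{2^l}$ using the privacy of additive sharing. For the shuffle I would invoke $\mathcal{F}_{\textit{smShuffle}}$ as an ideal call: $\textit{Sim}_C$, playing the functionality, samples a uniformly random subset $T\subseteq[b]$ with $|T|=c$, places the given output shares $\share{\mathcal{D}}_i$ into the rows of $\share{\hat{Z}}_i$ indexed by $T$, fills the remaining rows with uniform shares, and hands corrupted parties uniform $\share{\hat{\Phi}}_i$. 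In step 4-(2) it then sends honest flag shares chosen so that the reconstructed $\hat{\Phi}[j]=0$ for $j\in T$ and is uniformly random for $j\notin T$. Because $\mathcal{F}_{\textit{smShuffle}}$ outputs fresh random shares and the underlying $\Phi$ has its $c$ zeros at intersection positions and independent uniform values elsewhere, the real $\hat{\Phi}=\pi(\Phi)$ is a uniformly random arrangement of $c$ zeros and $b-c$ uniform values, exactly matching the simulation (up to the negligible $2^{-l}$ event that some redundant $\Phi[j]$ vanishes).

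In Case 2 ($P_1\in C$), in addition to the share messages above, $\textit{Sim}_C$ must simulate, for each honest party $P_{i_0}$, the OKVS table $S^{i_0}_f$ that $P_1$ receives in step 2-(2). I would sample $S^{i_0}_f$ uniformly from $\mathbb{Z}_{2^l}^{m'\times1}$ and answer $P_1$'s random-oracle queries lazily: by the double obliviousness and random decoding of the OKVS, together with the fact that each encoded feature block is one-time-padded by the fresh output $H_o(U^{i_0}[j][u])$, the simulated table is indistinguishable from an honest encoding, and it also hides the per-bin occupancy of $P_{i_0}$'s simple hash table. The values $V^{i_0}_f$ that $P_1$ derives by decoding and XORing with $H_o(Y^{i_0})$ are then determined inside $P_1$'s view and need satisfy no further constraint, since the downstream $\share{Z}_1$ enters $\mathcal{F}_{\textit{smShuffle}}$ and the output is replaced by the $T$-consistent shares as in Case 1.

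I expect the main obstacle to be arguing the joint consistency of the revealed plaintext flags $\hat{\Phi}$ with the embedded output shares and the corrupted parties' views: one must show that selecting the zero-flag rows of $\share{\hat{Z}}$ reproduces precisely the functionality's $\share{\mathcal{D}}_i$ while $\hat{\Phi}$ itself leaks nothing beyond $c$. The crux is that treating the shuffle as the ideal $\mathcal{F}_{\textit{smShuffle}}$ decorrelates the output shares from all inputs (they are freshly random subject only to summing to $\hat{Z}$), so the correlation present in the real execution collapses to the single public statistic $c$; the correctness computation shown above then guarantees that the $T$-indexed rows carry the genuine aligned features, and the OKVS obliviousness plus random-oracle masking guarantee that the intermediate messages carry no additional information. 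Composing with the security of $\Pi_{\rm smShuffle}$, which realizes $\mathcal{F}_{\textit{smShuffle}}$, closes the hybrid model and completes the argument.
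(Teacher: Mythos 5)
Your proposal follows the same overall strategy as the paper's proof: a simulation-based argument with a case split on whether $P_1$ is corrupted, replacing honest parties' OKVS tables by uniform strings via the double-obliviousness and random-decoding properties (plus the one-time-pad masking by $H_o$), and replacing dealt shares by uniform ring elements via the privacy of additive sharing. Where you genuinely go beyond the paper is step 4. The paper's proof lumps the entire redundant-data-removal step into ``the corrupted parties obtain shares, so the simulator picks random shares,'' which as written is not sufficient: the reconstruction $\hat{\Phi}=\mathit{Rec}(\share{\hat{\Phi}})$ reveals a \emph{plaintext} vector containing exactly $c$ zeros, so a simulator that contributes independent uniform honest-party shares would produce a reconstructed $\hat{\Phi}$ with no zeros (except with probability about $b\cdot 2^{-l}$), which is trivially distinguishable from the real execution. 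Your device of reading $c$ off the functionality output, sampling a random subset $T\subseteq[b]$ with $|T|=c$, programming the honest flag shares so that $\hat{\Phi}$ vanishes exactly on $T$, and embedding the given output shares $\share{\mathcal{D}}_i$ into the $T$-indexed rows is precisely what is needed to make the simulated view jointly consistent with the protocol output, and your observation that the real $\hat{\Phi}=\pi(\Phi)$ is a uniformly random arrangement of $c$ zeros among $b-c$ uniform values (in the $\mathcal{F}_{\textit{smShuffle}}$-hybrid) is the missing indistinguishability argument. The paper instead leans implicitly on the informal discussion in the design section (``the shuffle disrupts the order and each entry is $0$ or randomness'') without folding it into the simulator. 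In short: same skeleton, but your treatment of the flag-reveal and of output consistency fills a real gap in the paper's write-up; the only caveat is that you should state explicitly that $c$ is the declared leakage of $\mathcal{F}_{\textit{smFA}}$ (it is implied by the dimensions of the output $\share{\mathcal{D}}_i$, so the simulator legitimately has it), and that composing with Theorem~1 is needed since the PRF values $U^i,Y^i$ carried over from $\Pi_{\rm cmPSI}$ appear as OKVS keys here.
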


\begin{proof} 
We exhibit simulators $\textit{Sim}_C$ for simulating the view of the corrupt parties, including $P_1$ and excluding $P_1$, respectively, and prove that the simulated view is indistinguishable from the real one via standard hybrid arguments. Let $C$ be the set of corrupted $n-1$ parties. 

\noindent \underline{\textbf{Case 1 ($P_1 \in C$).}}
The corrupted parties obtain shares in steps 1, 2-(1), and 4, thus $\textit{Sim}_C$ can pick shares of random value on behalf of $P_i$ ($\notin C$) in these steps.
$\textit{Sim}_C$ samples uniformly random OKVS table $S'^{i} \leftarrow \mathbb{Z}_{2^l}^{m'\times d_i}$ where $m'$ is the size of OKVS table when encoding $h \cdot m$ number of elements. And $\textit{Sim}_C$ sends $S'^i$ to $P_1$ in step 2-(2).
We argue that the view output by $\textit{Sim}_C$ is indistinguishable from the real one. We first define three hybrid transcripts $T_0$,$T_1$,$T_2$, where $T_0$ is the real view of $C$, and $T_2$ is the output of $\textit{Sim}_C$.

\begin{enumerate}[label=\arabic{enumi}.,leftmargin=1em,itemsep=0pt, topsep=0pt, partopsep=0pt, parsep=0pt]
    \item $\textit{Hybrid}_0$.The first hybrid is the real interaction in Protocol~\ref{pro:fa}. Here, an honest party $P_i$ uses real inputs and interacts with the corrupt parties $C$. Let $T_0$ denote the real view of $C$.
    \item $\textit{Hybrid}_1$. Let $T_1$ be the same as $T_0$, except that the OKVS table $S'^i$ for $P_i (\notin C)$ is sampled uniformly from $\leftarrow \mathbb{Z}_{2^l}^{m' \times d_i}$. 
    By the double obliviousness and random decoding property of OKVS, the simulated $S'^i$ has the same distribution as $S$ in the real protocol. Hence, $T_1$ and $T_0$ are statistically indistinguishable.
    \item $\textit{Hybrid}_2$. Let $T_1$ be the same as $T_0$, except that the secret sharing operation is replaced by the secret share simulator. The underlying secret sharing guarantee $T_2$ is indistinguishable from $T_1$.This hybrid is exactly the view output by the simulator $\textit{Sim}_C$.
\end{enumerate}

\noindent \underline{\textbf{Case 2 ($P_1 \notin C$).}} 
The corrupted parties obtain shares in steps 1, 2-(1), and 4, thus the $\textit{Sim}_C$ can pick shares of random value to the view on behalf of $P_1$ in these steps. $\textit{Sim}_C$ receive OKVS table from $P_i$ on behalf of $P_1$ in step 2-(3).
We argue that the view output by $\textit{Sim}_C$ is indistinguishable from the real one. We first define two hybrid transcripts $T_0$, $T_2$, where $T_0$ is the real view of $C$, and $T_1$ is the output of $\textit{Sim}_C$.

\begin{enumerate}[label=\arabic{enumi}.,leftmargin=1em,itemsep=0pt, topsep=0pt, partopsep=0pt, parsep=0pt]
    \item $\textit{Hybrid}_0$.The first hybrid is the real interaction in Protocol~\ref{pro:fa}. Here, an honest party $P_i$ uses real inputs and interacts with the corrupt parties $C$. Let $T_0$ denote the real view of $C$.
    \item $\textit{Hybrid}_1$. Let $T_1$ be the same as $T_0$, except that the secret sharing operation is replaced by the secret share simulator. The underlying secret sharing guarantees that $T_1$ is indistinguishable from $T_0$.This hybrid is exactly the view output by the simulator $\textit{Sim}_C$.
\end{enumerate}

\end{proof}

\subsection{Communication Complexity}
\label{sec:design_comm}

As is shown in Table \ref{tab:comm-cmp},
we compare the communication costs of our \SecVPre with \baseline.Thee communication costs of key cryptographic primitives are defined as follows: $C_{\textit{oprf}}$, $C_{\textit{okvs}}$,$C_{\textit{opprf}}$ are $\mathcal{O}(m(\lambda + \log m))$, $\mathcal{O}(\kappa m)$ and $\mathcal{O}(m(\lambda + \log m + \kappa))$, respectively.

\begin{table}[htbp]
\belowrulesep=0pt 
\aboverulesep=0pt
\centering
\caption{Comparison of Communication sizes between our \SecVPre vs. the SOTA \baseline, where \( n \) represents the number of participating parties, \( d_i \) denotes the feature dimension of party \( P_i \), \( d \) denotes the total feature dimension, \( m \) denotes the dataset size and \( l \) denotes the bit length.}
\fontsize{7pt}{8pt}\selectfont
\setlength{\tabcolsep}{1pt}
\renewcommand{\arraystretch}{1.12}
\begin{tabular}{l|c}
\toprule
Framework & Communication Sizes \\ \midrule
\baseline & $\mathcal{O}(C_{\textit{oprf}} + (1+d_2)C_{\textit{opprf}} + (d_1 + 2d + 1)ml)$ \\
Ours ($n=2$) & $\mathcal{O}(C_{\textit{oprf}} + (1 + d_2)C_{\textit{okvs}} + (d_1 + 2d + 1)ml)$ \\
Ours & $\mathcal{O}(C_{\textit{oprf}} + (n-1 + \sum_{i=2}^n(d_i))C_{\textit{okvs}} + (d_1 + n^2d-nd + 1)ml)$ \\
\bottomrule
\end{tabular}
\label{tab:comm-cmp}
\end{table}
\section{Evaluation}

\subsection{Experiment Setting}

\subsubsection{Environment} We perform all experiments on a single Linux server equipped with 20-core 2.4 GHz Intel Xeon CPUs and 1T RAM. We used a single process to simulate a single party. Additionally, we apply tc tool\footnote{\url{https://man7.org/linux/man-pages/man8/tc.8.html}} to simulate LAN setting with a bandwidth of 1GBps and sub-millisecond round-trip time (RTT) latency and  WAN setting with 40Mbps bandwidth and 40ms RTT latency, following prior work~\cite{mohassel2018aby3}. Without specifications, the default network is the WAN setting.

\subsubsection{Implementation}
We implement \SecVPre in C++, integrating essential components such as OPRF, OKVS, cuckoo hashing, and simple hashing from~\cite{rindal2022blazing}. In particular, for OPRF and OKVS, we leverage the optimized 3H-GCT algorithm from~\cite{rindal2022blazing}, setting the cluster size to $2^{14}$ and the weight parameter to $3$.
For hashing, we configure the hash table size as $b = 1.27m$ and utilize $h = 3$ hash functions. Additionally, we set the bit-length parameter to $l = 64$ and define the computational ring as $\mathbb{Z}_{2^{64}}$. The protocol operates with a computational security parameter of $\kappa = 128$ and statistical security parameter of $\lambda = 40$. The source code of \SecVPre is publicly available\footnote{\SecVPre is available at: https://github.com/chenshuyuhhh/IDCloak.git.}.

\subsubsection{Baseline}
To the best of our knowledge, \baseline is the only framework with identical functionality to \SecVPre, except that it is limited to the two-party setting. 
\baseline, like our scheme, first computes the intersection IDs, aligns the features, and uses shuffle to remove redundant data. Since \baseline is not open-sourced, we re-implement it\footnote{\baseline is available at: https://github.com/chenshuyuhhh/iPrivJoin.git.}.
Both \baseline and \SecVPre are evaluated under identical experimental settings, except for the number of parties $n=2$ for \baseline.
We do not include mPSI-based dataset join methods as our baselines because they expose intersection IDs in plaintext, whereas \SecVPre keeps them private for stronger privacy protection.

\subsubsection{Datasets}
As is shown in Table~\ref{tab:datasets}, we employ six widely-used real-world datasets, \textit{phishing}, \textit{myocardial infarction complications}, \textit{Appliances energy prediction}, \textit{bank marketing}, \textit{Give me some credit}, \textit{Health Indicators}
all of which are available in the UCI Machine Learning Repository~\cite{asuncion2007uci} or Kaggle~\cite{GiveMeSomeCredit}. These datasets have 10 to 111 features and contain between 1353 and 253680 samples. We partition each dataset vertically, ensuring that all parties hold the same number of samples as in the original dataset, but with uniformly distributed disjoint feature subsets. Specifically, $80\%$ of the samples retain randomly selected intersection IDs, whereas the remaining $20\%$ consist of randomly generated samples.

\begin{table}[htbp]
\belowrulesep=0pt 
\aboverulesep=0pt
\centering
\caption{Datasets used for end-to-end experiments}
\scalebox{0.85}{
\begin{tabular}{l|c|c}
\toprule
Datasets & \#Features ($d$) & Dataset size ($m$) \\
\midrule
 Phishing & 10 & 1353  \\
 Myocardial & 111 & 1700 \\
 Energy & 29 & 19735 \\
 Bank & 17 & 45211 \\
 Credit & 12 & 150000 \\
 Health& 21 & 253680 \\
\bottomrule
\end{tabular}
}
\label{tab:datasets}
\end{table}

\subsection{Evaluation of \SecVPre on Real-world Datasets}

We first conduct end-to-end online efficiency experiments by varying the number of parties ($n$) from $2$ to $6$. 
As is shown in Table~\ref{tab:e2eeva}, we can summarize the following:
\begin{itemize}[leftmargin=1em,itemsep=0pt, topsep=0pt, partopsep=0pt, parsep=0pt]
    \item In the two-party setting, \SecVPre outperforms \baseline by $1.69\times \sim 1.92\times$ and $1.50\times \sim 1.72\times$ in terms of time and communication sizes, respectively. This efficiency gain primarily stems from \SecVPre using lightweight OKVS, whereas \baseline relies on the more communication-expensive OPPRF (Section \ref{sec:design_comm} provides a detailed comparison of the communication complexity).
    \item In multi-party settings involving $3$ to $6$ parties, \SecVPre maintains efficiency comparable to \baseline. This is attributed to the scalability of our multi-party protocol, which preserves performance as the number of parties increases. 
    Notably, \SecVPre demonstrates linear scalability with respect to the number of parties, highlighting its practicality for multi-party settings.
\end{itemize}

\begin{table*}[htbp]
\belowrulesep=0pt 
\aboverulesep=0pt
\centering
\caption{
\SecVPre vs. the two-party framework \baseline\cite{liu2023iprivjoin} on six real-world datasets.
}
\scalebox{0.87}{
\setlength{\tabcolsep}{4.2pt}
\begin{tabular}{c|c|cccccc|cccccc}
\toprule
\multirow{2}{*}{Framework}             & \multirow{2}{*}{$n$} & \multicolumn{6}{c|}{Time (s)}                              & \multicolumn{6}{c}{Communication sizes (MB)}      
\\ \cline{3-14}
                           &                    & Phishing & Myocardial & Energy & Bank  & Credit & Health & Phishing & Myocardial & Energy & Bank   & Credit & Health  \\ \midrule
\baseline                  & 2                  & 3.13    & 8.36        & 22.82  & 31.62 & 70.61  & 192.53 & 2.88    & 13.13       & 42.14  & 62.35  & 146.46 & 401.53  \\ \midrule
\multirow{5}{*}{\SecVPre}      & 2                  & 1.73    & 4.37        & 12.49  & 16.50 & 41.72  & 109.13 & 1.82    & 8.74        & 25.91  & 36.29  & 89.38  & 240.14  \\
                           & 3                  & 2.05    & 5.61        & 17.99  & 21.68 & 60.74  & 162.79 & 3.82    & 19.96       & 59.42  & 83.06  & 205.11 & 560.17  \\
                           & 4                  & 2.37    & 7.22        & 23.63  & 28.96 & 79.93  & 212.31 & 6.09    & 34.45       & 104.18 & 145.91 & 356.62 & 975.59  \\
                           & 5                  & 2.64    & 8.97        & 29.20  & 37.76 & 99.36  & 265.19 & 8.69    & 52.63       & 159.00 & 223.02 & 540.55 & 1502.14 \\
                           & 6                  & 3.02    & 10.85       & 35.49  & 47.59 & 120.60 & 328.47 & 11.41   & 74.42       & 225.85 & 314.31 & 772.74 & 2129.98
                      \\ \bottomrule
\end{tabular}
}
\label{tab:e2eeva}
\end{table*}

\subsection{Evaluation of cmPSI Protocol}

\subsubsection{Evaluation of Optimization for cmPSI Protocol}
We conduct experiments to evaluate the performance of our proposed cmPSI protocol using an optimized communication structure, comparing it against ring-based and star-based communication structures under varying network conditions of 10Mbps, 40Mbps, and 100Mbps bandwidth, with a fixed network latency of 40ms.
As is shown in Figure~\ref{fig:psiopt}, the cmPSI protocol with optimized communication structure outperforms the ring-based communication structure and the star-based communication structure by $1.14\times \sim 1.80\times$ and $1.14 \times \sim 1.41\times$, respectively. 
Furthermore, the performance advantage of the optimized communication structure becomes increasingly significant as the number of parties increases.

\begin{figure}[htbp]
    \centering
    \captionsetup[subfigure]{labelfont={scriptsize}, textfont={scriptsize}}
    \subfloat[\scriptsize 10Mbps bandwidth\label{fig:psi_net10}]{%
        \includegraphics[width=0.15\textwidth]{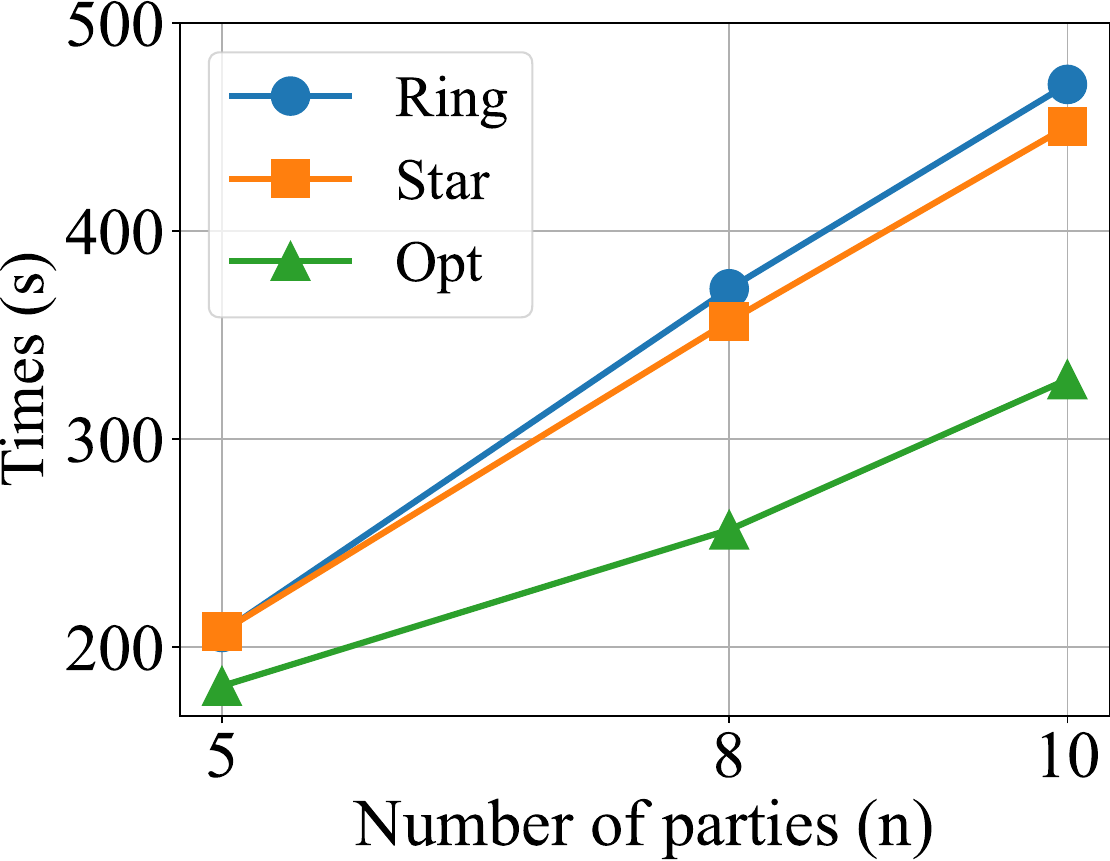}%
    }
    \hspace{0.3em}
    \subfloat[\scriptsize 40Mbps bandwidth\label{fig:psi_net40}]{%
        \includegraphics[width=0.15\textwidth]{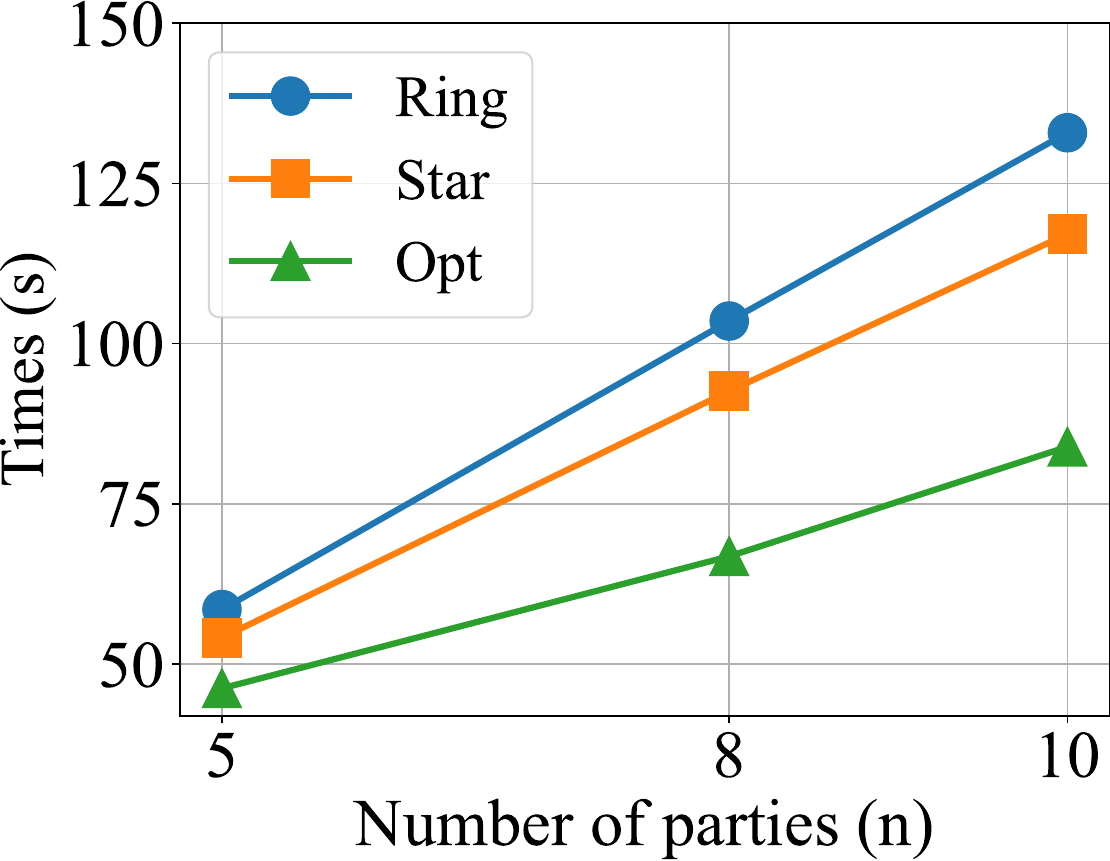}%
    }
    \hspace{0.3em}
    \subfloat[\scriptsize 100Mbps bandwidth\label{fig:psi_net100}]{%
        \includegraphics[width=0.145\textwidth]{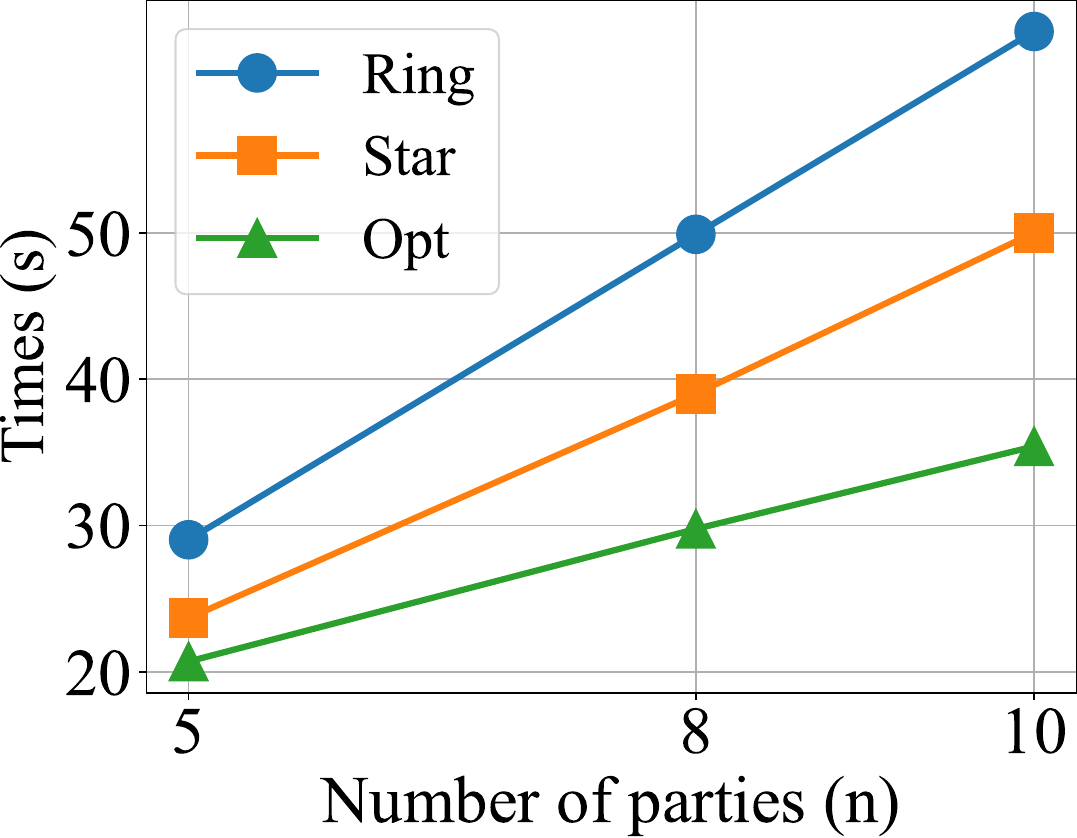}%
    }
    \caption{Comparison of time (in seconds) for different communication structures in cmPSI protocol under different network settings. `Ring', `Star', and `Opt' denote a ring-based communication structure, a star-based communication structure, and optimized communication structure, respectively. }
    \label{fig:psiopt}
\end{figure}

\subsubsection{Evaluation of SOTA Comparison for cmPSI Protocol}
We conduct experiments to compare our proposed cmPSI protocol with the SOTA cmPSI protocol~\cite{chandran2021efficient}, which also keeps the intersection IDs private. Our experiments configure each protocol to resist the maximum number of colluding parties that it can tolerate. Specifically, our proposed cmPSI protocol provides stronger security  (resists against up to \(n-1\) colluding parties), whereas~\cite{chandran2021efficient} only resists against up to \(n/2\) colluding parties.
Moreover, to align its output representation with~\cite{chandran2021efficient}, we add a secure zero-equality operator~\cite{mp-spdz} to the output of our proposed cmPSI protocol.
Specifically, in our protocol, when an ID in \(P_1\)'s hash bin belongs to the intersection, all other parties receive a secret-shared flag of 0; otherwise, they receive a secret-shared flag of random values. In contrast, \cite{chandran2021efficient} assigns a secret-shared flag of 1 indicating intersection IDs.

\begin{table}[htbp]
\centering
\belowrulesep=0pt 
\aboverulesep=0pt
 \caption{
Our cmPSI vs. SOTA cmPSI protocol\cite{chandran2021efficient} on datasets with sizes $m \in \{ 2^{16}, 2^{20} \}$, and involving 3 to 10 parties. `Comm' is short for communication.}
\label{tab:cmpsi-eva}
\scalebox{0.82}{
\setlength{\tabcolsep}{5pt}
\begin{tabular}{c|c|c|cccc}
\toprule
              & $m$                   & Protocol       & 3       & 5       & 8       & 10      \\ \midrule
\multirow{4}{*}{\begin{tabular}{@{}c@{}}LAN \\ Time \\ (s)\end{tabular}}  & \multirow{2}{*}{$2^{16}$} & \cite{chandran2021efficient} & 1.46    & 2.04    & 2.89    & 3.47    \\
                      &                     & Our cmPSI           & 1.05    & 1.43    & 1.99    & 2.34    \\ \cline{2-7}
                      & \multirow{2}{*}{$2^{20}$} & \cite{chandran2021efficient} & 22.40   & 33.62   & 48.12   & 57.60   \\
                      &                     & Our cmPSI           & 16.09   & 19.16   & 26.16   & 28.35   \\ \midrule
\multirow{4}{*}{\begin{tabular}{@{}c@{}}WAN \\ Time \\ (s)\end{tabular}}  & \multirow{2}{*}{$2^{16}$} & \cite{chandran2021efficient} & 16.74   & 30.78   & 49.87   & 63.72   \\
                      &                     & Our cmPSI           & 6.60    & 8.47    & 11.37   & 13.08   \\ \cline{2-7}
                      & \multirow{2}{*}{$2^{20}$} & \cite{chandran2021efficient} & 278.95  & 518.67  & 875.99  & 1130.24 \\
                      &                     & Our cmPSI           & 56.27   & 84.90   & 118.53  & 145.25  \\ \midrule
\multirow{4}{*}{\begin{tabular}{@{}c@{}}Comm \\ Sizes \\ (MB)\end{tabular}} & \multirow{2}{*}{$2^{16}$} & \cite{chandran2021efficient} & 82.09   & 171.49  & 312.14  & 414.87  \\
                      &                     & Our cmPSI           & 14.72   & 29.44   & 51.52   & 66.25   \\ \cline{2-7}
                      & \multirow{2}{*}{$2^{20}$} & \cite{chandran2021efficient} & 1412.08 & 2941.12 & 5339.59 & 8158.22 \\
                      &                     & Our cmPSI           & 207.66  & 415.31  & 726.80  & 934.45 
\\ \bottomrule
\end{tabular}
}
\end{table}

As is shown in Table~\ref{tab:cmpsi-eva}, we can summarize the following:
Our proposed cmPSI protocol, which offers stronger security, outperforms SOTA cmPSI~\cite{chandran2021efficient} by $2.54\times \sim 7.78\times$ and $1.39\times \sim 2.03\times$ in the WAN and LAN settings, respectively, in terms of time. These improvements primarily stem from the reduced communication overhead in our protocol, which achieves a $5.58\times \sim 8.73\times$ reduction compared to cmPSI~\cite{chandran2021efficient}. Specifically,  cmPSI~\cite{chandran2021efficient} relies on PSM primitives, which impose higher communication costs than the OKVS and OPRF primitives used in our protocol. In addition, whereas the SOTA cmPSI protocol employs a star communication structure, our proposed protocol adopts an optimized communication structure that enhances the efficiency without compromising security. Notably, as the number of parties $n$ increases or the dataset size $m$ grows, the advantages of our protocol become even more pronounced.

\subsection{Evaluation of smFA Protocol}

\subsubsection{Evaluation of SOTA Comparison for Secure Shuffle Protocol}
We evaluate the efficiency of our proposed secure multi-party shuffle protocol, a core component of the smFA protocol, against the SOTA secure multi-party shuffle protocol implemented by MP-SPDZ~\cite{mp-spdz}.  
As shown in Table~\ref{tab:shuffle-eva}, our proposed secure shuffle protocol achieves $21.44 \times \sim 41.47\times$ and $61.62\times \sim 138.34\times$ speedup over the SOTA protocol~\cite{mp-spdz} in the LAN and WAN settings, respectively. These improvements primarily stem from significantly lower communication sizes and rounds, precisely, a $105.69\times \sim 132.13\times$ reduction in communication sizes compared to ~\cite{mp-spdz}. 
Specifically, our proposed secure shuffle protocol incurs the communication sizes of \( \mathcal{O}(ndlm) \) per party over \( n \) communication rounds, whereas the SOTA secure shuffle protocol~\cite{mp-spdz} leverages Waksman networks~\cite{waksman1968permutation} for shuffling, resulting in larger communication sizes of \( \mathcal{O}(ndlm\log m) \) and a requirement of \( \mathcal{O}(\log m) \) rounds per party. Consequently, our protocol demonstrates a more pronounced efficiency advantage as the number of parties and the dataset size increase. 
For instance, in the WAN setting with \( n=5 \), increasing \( m \) from \( 2^{16} \) to \( 2^{20} \) results in an efficiency improvement over \cite{mp-spdz} from $104.50\times$ to $138.34\times$. Similarly, when \( m=2^{20} \), increasing \( n \) from 3 to 5  boosts the efficiency improvement from $81.89\times$ to $138.34\times$.

\begin{table}[htbp]
\belowrulesep=0pt 
\aboverulesep=0pt
\caption{The online Time (in seconds) and communication size (in MBs) of our proposed secure multi-party shuffle protocol vs. SOTA secure multi-party shuffle protocol\cite{mp-spdz} on datasets with sizes $m \in \{ 2^{16}, 2^{20} \}$, and involving 3 and 5 parties. `Pro' and `Comm' are short for protocol and communication. respectively.}
\label{tab:shuffle-eva}
\setlength{\tabcolsep}{1.5pt}
\scalebox{0.83}{
\begin{tabular}{c|c|cc|cc|cc}
\toprule
\multirow{2}{*}{$m$}  & \multirow{2}{*}{Pro} & \multicolumn{2}{c|}{LAN Time (s)} & \multicolumn{2}{c|}{WAN Time (s)} & \multicolumn{2}{c}{Comm sizes (MB)} \\ \cline{3-8}
                    &                            & 3               & 5              & 3              & 5               & 3                    & 5                     \\ \midrule
\multirow{2}{*}{$2^{16}$} & \cite{mp-spdz}                     & 38.79           & 144.37         & 2154.36        & 11204.60        & 12484.92             & 68456.40              \\
                    & Ours                        & 1.81            & 6.49           & 34.96          & 107.22          & 118.11               & 647.70                \\ \midrule
\multirow{2}{*}{$2^{20}$} & \cite{mp-spdz}                     & 1330.87         & 3316.92        & 46267.61       & 239100.02       & 249687.80            & 1369105.00            \\
                    & Ours                        & 32.09           & 113.24         & 564.98         & 1728.33         & 1889.76              & 10363.20      
\\ \bottomrule
\end{tabular}
}
\end{table}

\subsubsection{Evaluation of Efficiency Improvement for smFA using our secure shuffle}
We evaluate the efficiency improvements of the smFA protocol from using the SOTA secure multi-party shuffle provided by MP-SPDZ~\cite{mp-spdz} to use our proposed secure shuffle protocol.
As is shown in Figure~\ref{fig:fa_shu}, our proposed secure shuffle protocol significantly reduces the time and communication sizes of the smFA protocol by $40.54\times \sim 53.58\times$ and $65.63\times \sim 81.29\times$ respectively, compared with integrating the SOTA secure shuffle in the smFA protocol. 
The improvements occur because the shuffle operation alone contributes to over \(99\%\) of the total time and communication sizes when using the SOTA secure multi-party shuffle protocol. 

\begin{figure}[htbp]
    \centering
    \captionsetup[subfigure]{labelfont={scriptsize}, textfont={scriptsize}}
    \subfloat[{\scriptsize Time ($m=2^{16}$)}]{\includegraphics[width=0.22\textwidth]{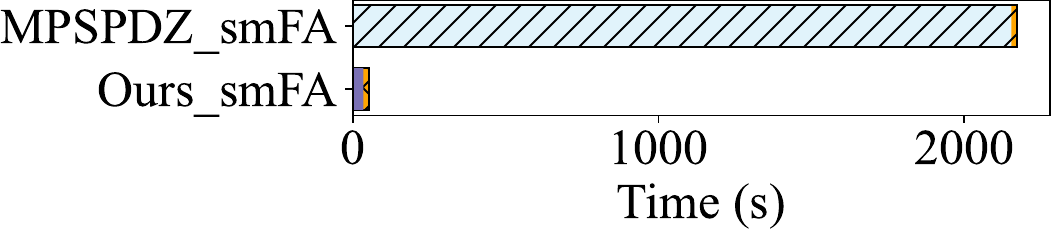}}
    \hspace{0.15cm}
    \subfloat[\scriptsize Communication ($m=2^{16}$)]{\includegraphics[width=0.22\textwidth]{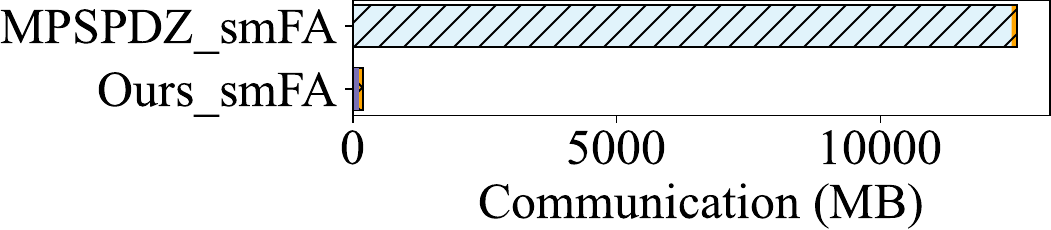}}

 \subfloat[\scriptsize Time ($m=2^{20}$)]{\includegraphics[width=0.22\textwidth]{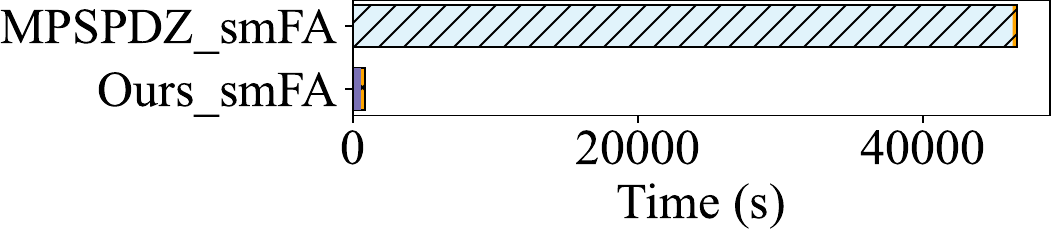}}
    \hspace{0.15cm}
    \subfloat[\scriptsize Communication ($m=2^{20}$)]{\includegraphics[width=0.22\textwidth]{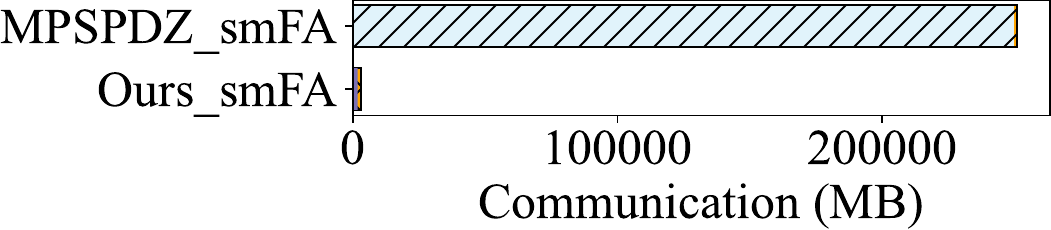}}

\subfloat{
\includegraphics[width=0.49\textwidth]{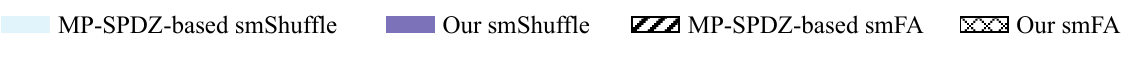}}
\caption{Online time (in seconds) and communication sizes (in MBs) of the smFA protocol using our proposed secure multi-party shuffle vs. the smFA protocol using the SOTA secure multi-party shuffle~\cite{mp-spdz}. Evaluations are conducted on datasets with varying dataset sizes \( m \in \{2^{16}, 2^{20}\} \), fixed feature dimension \( d_i = 10 \)(\( i \in [n] \)), and a fixed number of parties \( n = 3 \).}
\label{fig:fa_shu}
\end{figure}

\subsubsection{Evaluation of Efficiency for smFA Protocol}
We evaluate the efficiency of our proposed smFA protocol under varying numbers of parties ($n$), dataset sizes ($m$), and feature dimensions ($d$). 
As is shown in Figure~\ref{fig:fa}, the time and communication sizes of the smFA protocol exhibit linear growth with respect to the number of parties, dataset size, and feature dimensions. These results highlight the practicality of the smFA protocol in multi-party settings.

\begin{figure}[htbp]
    \centering
    \captionsetup[subfigure]{labelfont={scriptsize}, textfont={scriptsize}}
    \subfloat[{\scriptsize Parties numbers}\label{fig:2}]{\includegraphics[width=0.155\textwidth]{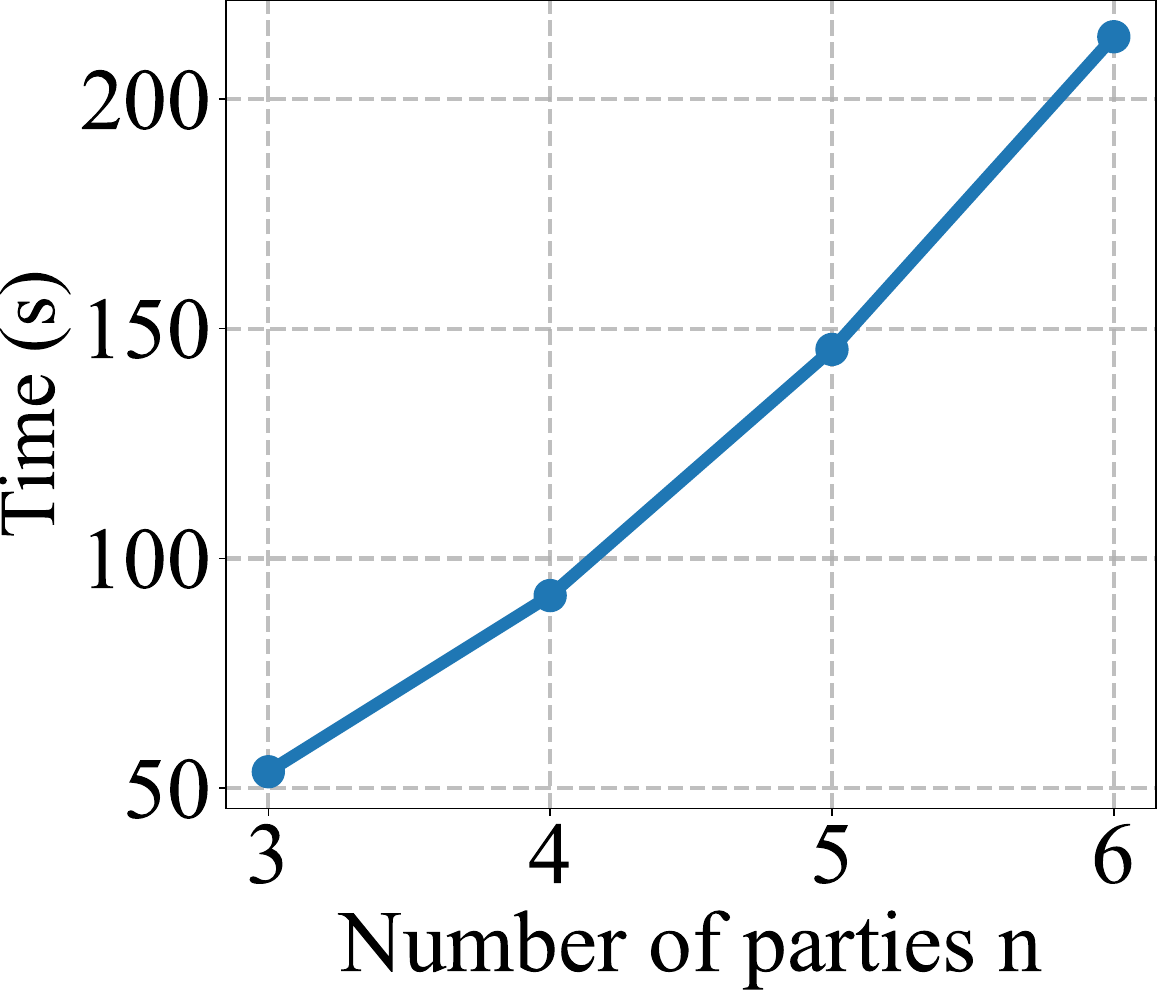}}
    \hspace{0.1cm}
    \subfloat[\scriptsize Dataset sizes \label{fig:1}]{\includegraphics[width=0.152\textwidth]{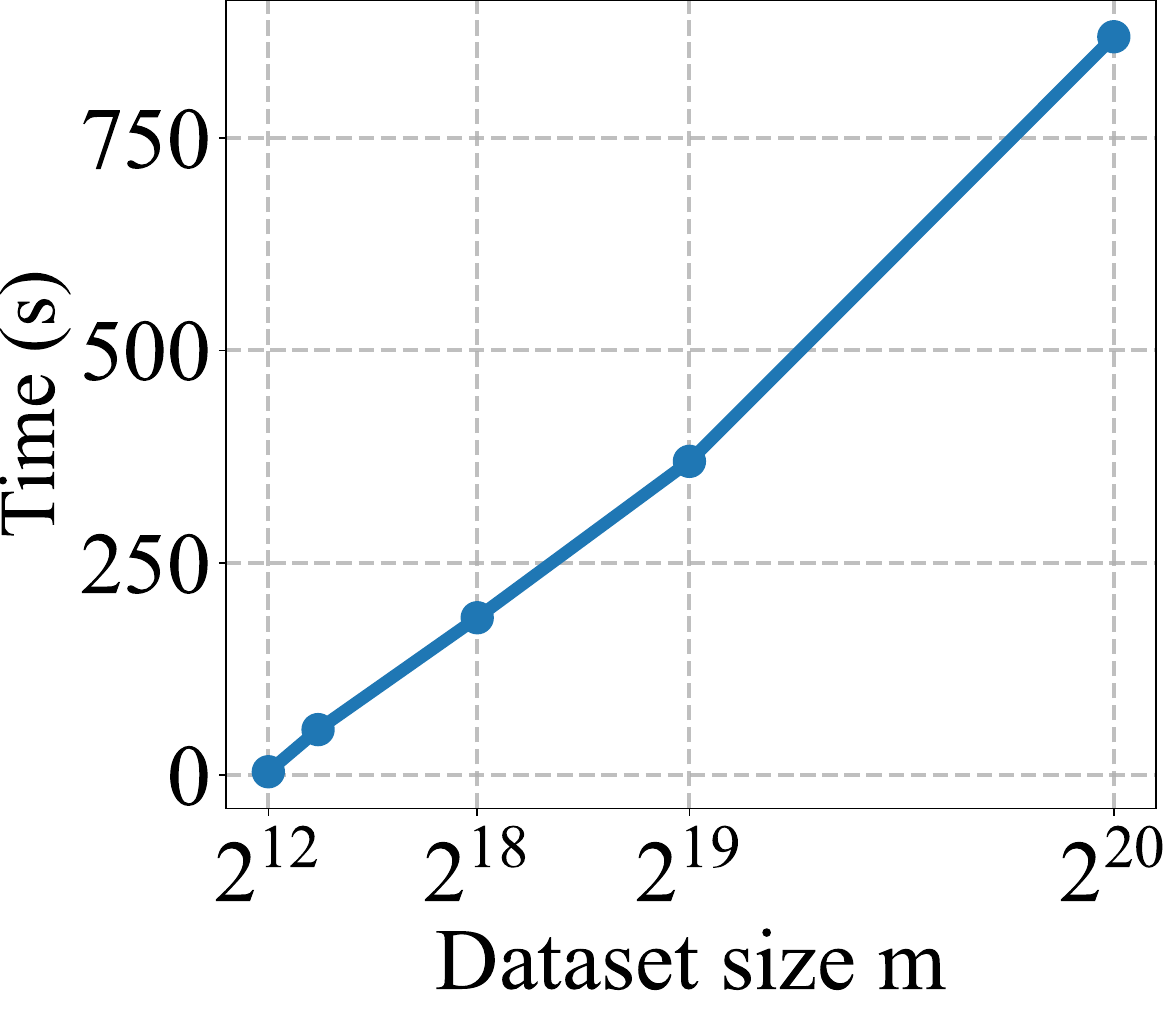}}
    \hspace{0.1cm}
    \subfloat[\scriptsize Feature Dimensions\label{fig:3}]{\includegraphics[width=0.15\textwidth]{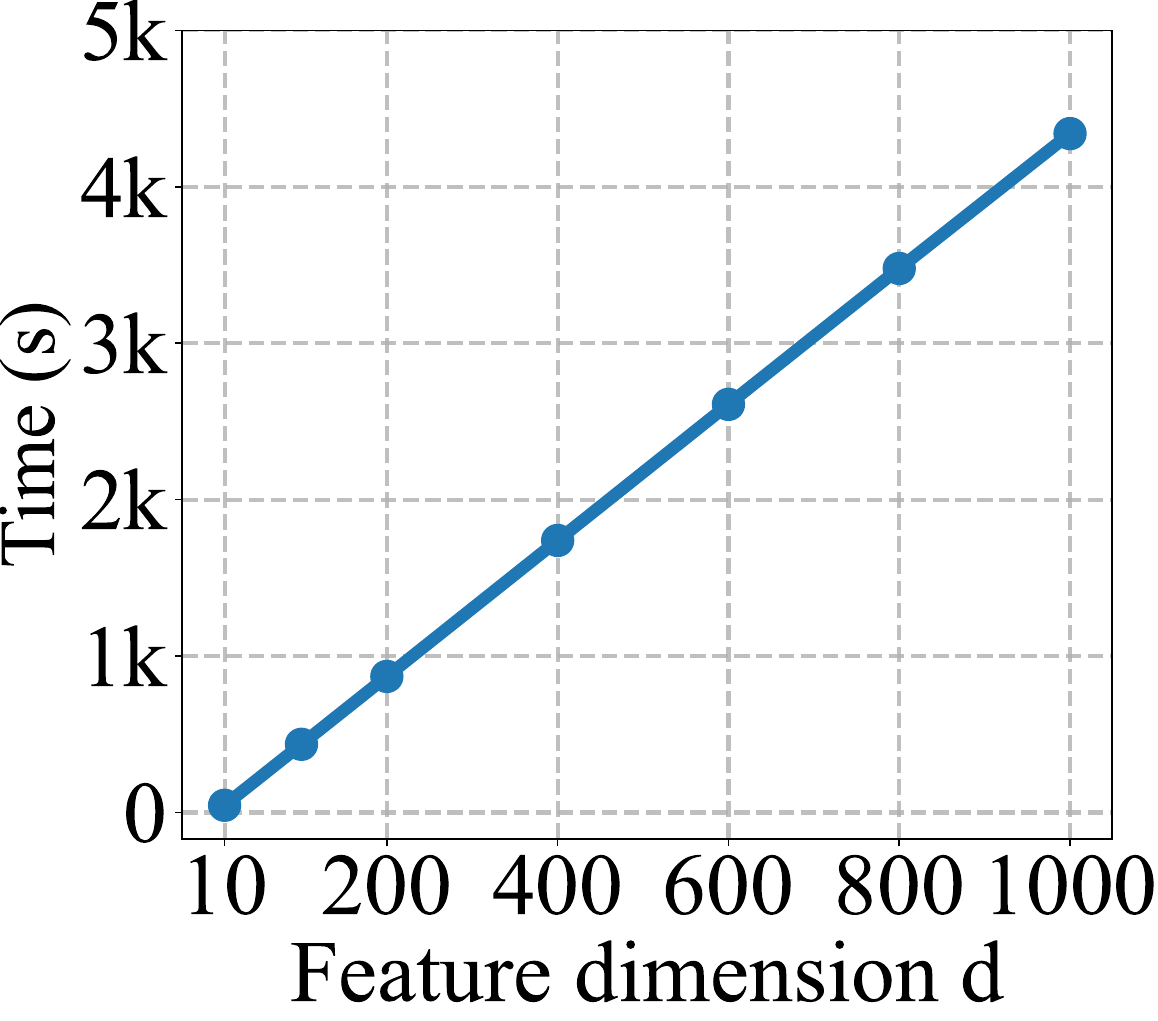}}

    \subfloat[\scriptsize Parties numbers\label{fig:5}]{\includegraphics[width=0.155\textwidth]{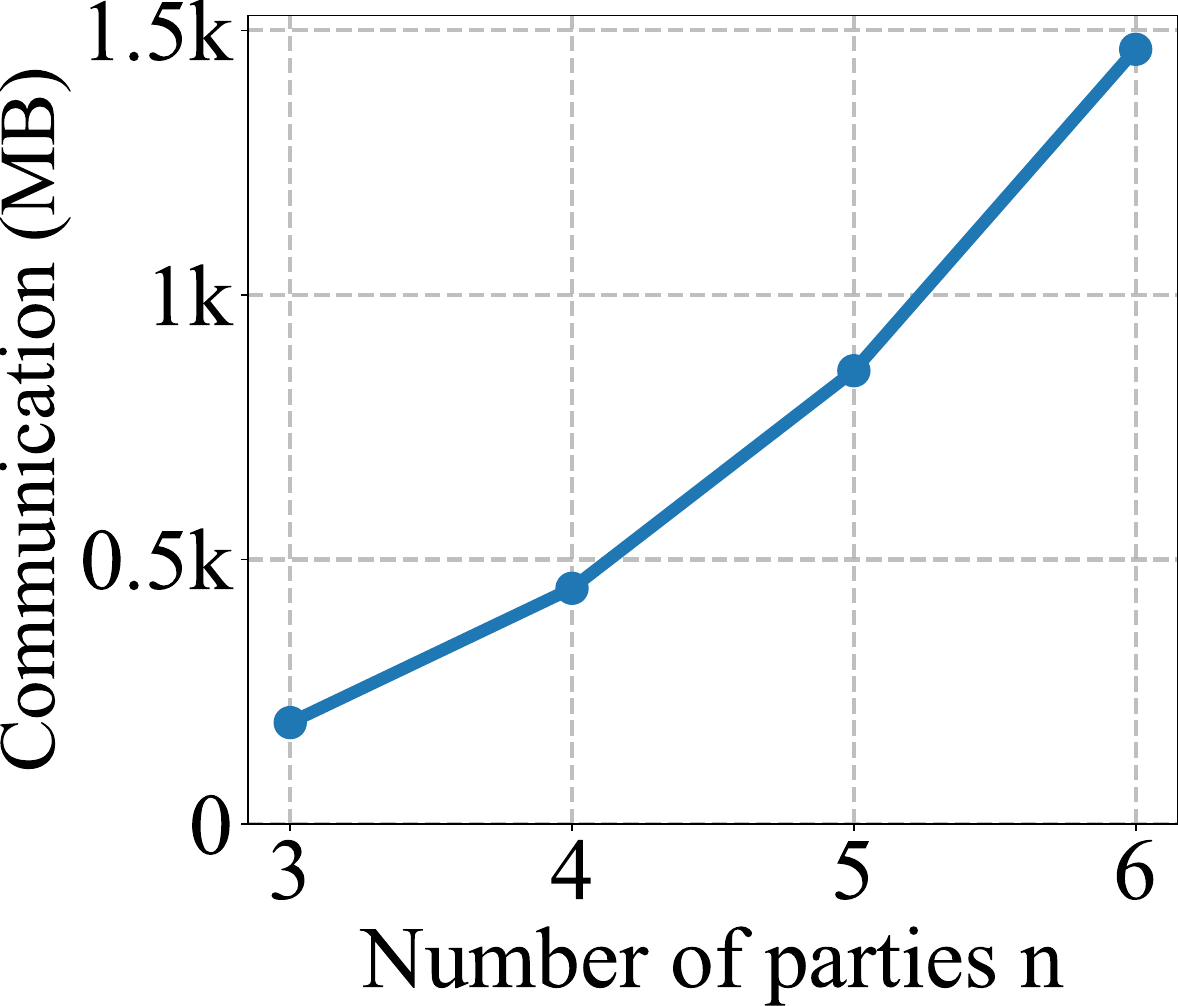}}
    \hspace{0.1cm}
    \subfloat[\scriptsize Dataset sizes\label{fig:4}]{\includegraphics[width=0.145\textwidth]{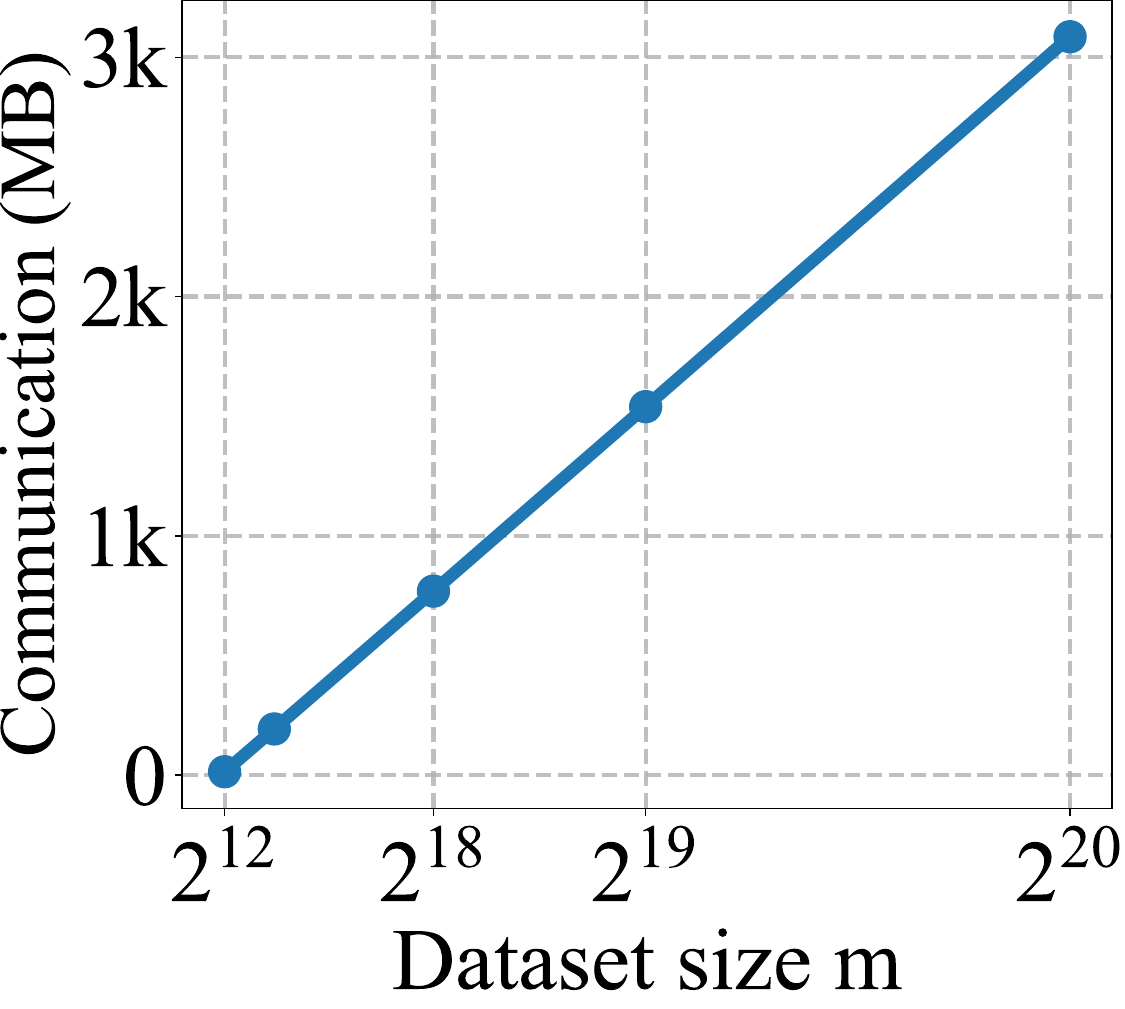}}
    \hspace{0.1cm}
    \subfloat[\scriptsize Feature Dimensions\label{fig:6}]{\includegraphics[width=0.155\textwidth]{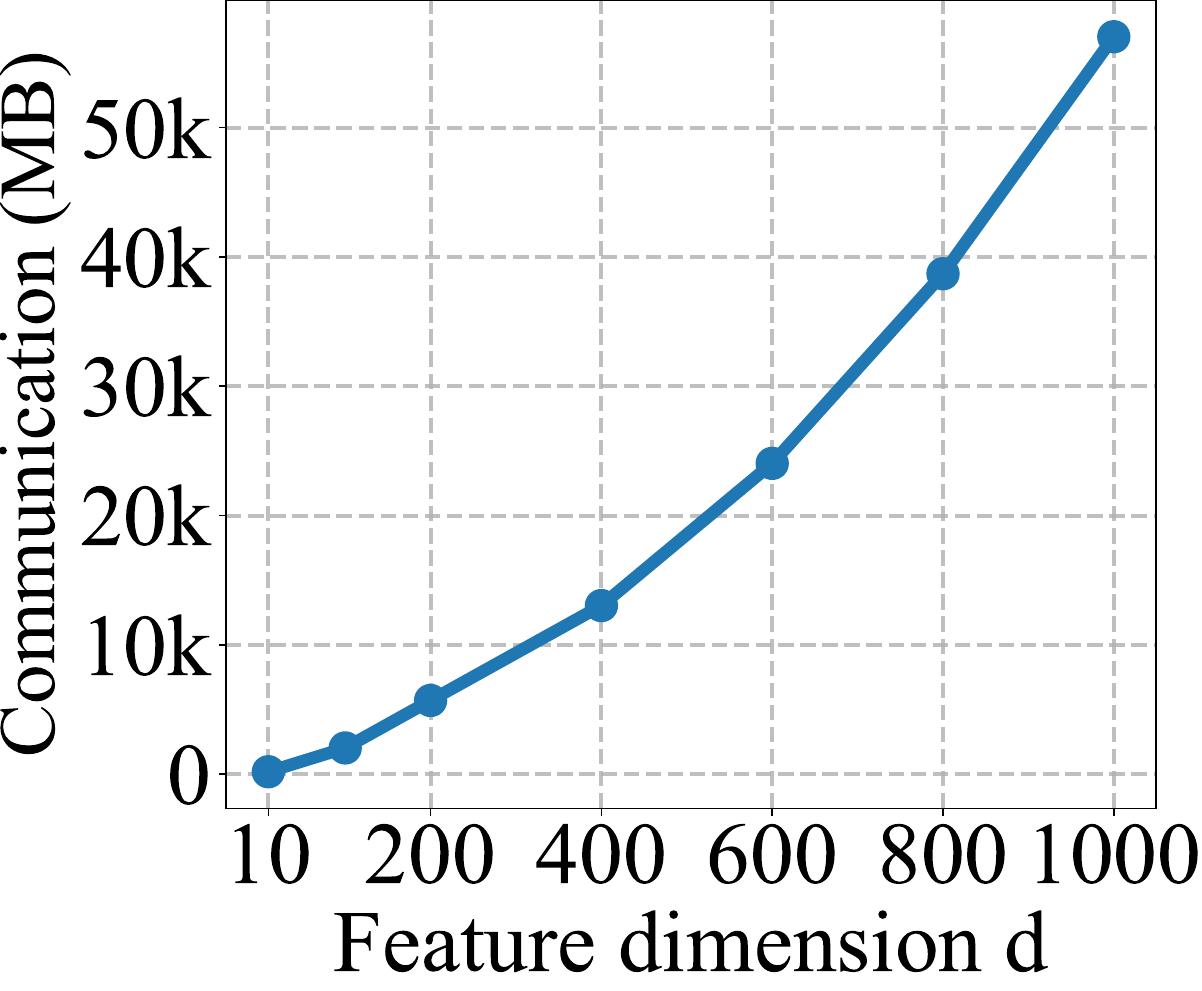}}
    \centering
    \caption{Online Time (in seconds) and communication sizes (in MB) of our smFA protocol under varying parameters. When varying \( n \), parameters \( d_i = 10 \) and \( m = 2^{16} \) are fixed; when varying \( m \), parameters \( n = 3 \) and \( d_i = 10 \) are fixed; when varying \( d_i \), parameters \( n = 3 \) and \( m = 2^{16} \) are fixed. Note: `k' denotes 1000.}
    \label{fig:fa}
\end{figure}
\section{Discussion}

\noindent \textbf{Revealing the Size of the Intersection Set.} 
Although \SecVPre reveals the intersection size $c$, this information is essential for practical applications \cite{ ghosh2019communication,badrinarayanan2021multi}. In \VPPML, parties often rely on \(c\) to decide whether to continue training; if \(c\) is below a certain threshold (e.g., \(c < 50\% \cdot n\)), they can terminate early to avoid wasting effort. Thus, revealing \(c\) in \SecVPre is reasonable. Moreover, \SecVPre provides \(c\) in the cmPSI protocol, allowing parties to end the process promptly if the intersection size is too small.

\noindent \textbf{Supporting \VPPML Frameworks with Different Schemes.} 
Our \SecVPre generates training dataset in ASS, a commonly used scheme in \VPPML frameworks \cite{mohassel2017secureml,mp-spdz}. However, practical applications may require other schemes, such as boolean secret sharing or homomorphic encryption, depending on the training setup. 
To address this, \SecVPre can use existing conversion protocols \cite{demmler2015aby,mohassel2018aby3} to transform ASS dataset into the required scheme.
\section{Conclusion}

In this paper, we investigate that existing frameworks for secure dataset join in \VPPML could be impractical because they are insecure when they expose the intersection IDs; or they rely on a strong trust assumption requiring a non-colluding auxiliary server; or they are limited to the two-party setting. 
To resolve the problem, we propose \SecVPre, the first practical secure multi-party dataset join framework for \VPPML without a non-colluding auxiliary server. \SecVPre consists of two efficient protocols.
First, our proposed circuit-based multi-party private set intersection (cmPSI) protocol securely computes secret-shared flags indicating intersection IDs. Second, our proposed secure multi-party feature alignment protocol leverages our proposed secure multi-party shuffle protocol to construct the secret-shared and joint dataset based on these secret-shared flags. 
Our experiments show that in the two-party setting, \SecVPre outperforms \baseline, a state-of-the-art secure two-party dataset join framework, and maintains comparable efficiency even as the number of parties increases.
Furthermore, compared to the SOTA protocol, our proposed cmPSI protocol offers a stronger security guarantee while improving efficiency by up to $7.78\times$ in time and $8.73\times$ in communication sizes.
Additionally, our secure multi-party shuffle protocol outperforms the SOTA protocol by up to $138.34\times$ in time and $132.13\times$ in communication sizes. 

\bibliographystyle{IEEEtran}
\bibliography{citations}

\begin{thebibliography}{10}
\providecommand{\url}[1]{#1}
\csname url@samestyle\endcsname
\providecommand{\newblock}{\relax}
\providecommand{\bibinfo}[2]{#2}
\providecommand{\BIBentrySTDinterwordspacing}{\spaceskip=0pt\relax}
\providecommand{\BIBentryALTinterwordstretchfactor}{4}
\providecommand{\BIBentryALTinterwordspacing}{\spaceskip=\fontdimen2\font plus
\BIBentryALTinterwordstretchfactor\fontdimen3\font minus \fontdimen4\font\relax}
\providecommand{\BIBforeignlanguage}[2]{{%
\expandafter\ifx\csname l@#1\endcsname\relax
\typeout{** WARNING: IEEEtran.bst: No hyphenation pattern has been}%
\typeout{** loaded for the language `#1'. Using the pattern for}%
\typeout{** the default language instead.}%
\else
\language=\csname l@#1\endcsname
\fi
#2}}
\providecommand{\BIBdecl}{\relax}
\BIBdecl

\bibitem{chen2021homomorphic}
C.~Chen, J.~Zhou, L.~Wang, X.~Wu, W.~Fang, J.~Tan, L.~Wang, A.~X. Liu, H.~Wang, and C.~Hong, ``When homomorphic encryption marries secret sharing: Secure large-scale sparse logistic regression and applications in risk control,'' in \emph{Proceedings of the 27th ACM SIGKDD Conference on Knowledge Discovery \& Data Mining}, 2021, pp. 2652--2662.

\bibitem{lin2024ents}
G.~Lin, W.~Han, W.~Ruan, R.~Zhou, L.~Song, B.~Li, and Y.~Shao, ``Ents: An efficient three-party training framework for decision trees by communication optimization,'' in \emph{Proceedings of the 2024 on ACM SIGSAC Conference on Computer and Communications Security}, 2024, pp. 4376--4390.

\bibitem{mohassel2017secureml}
P.~Mohassel and Y.~Zhang, ``Secureml: A system for scalable privacy-preserving machine learning,'' in \emph{2017 IEEE symposium on security and privacy (SP)}.\hskip 1em plus 0.5em minus 0.4em\relax IEEE, 2017, pp. 19--38.

\bibitem{kolesnikov2017practical}
V.~Kolesnikov, N.~Matania, B.~Pinkas, M.~Rosulek, and N.~Trieu, ``Practical multi-party private set intersection from symmetric-key techniques,'' in \emph{Proceedings of the 2017 ACM SIGSAC Conference on Computer and Communications Security}, 2017, pp. 1257--1272.

\bibitem{pinkas2018scalable}
B.~Pinkas, T.~Schneider, and M.~Zohner, ``Scalable private set intersection based on ot extension,'' \emph{ACM Transactions on Privacy and Security (TOPS)}, vol.~21, no.~2, pp. 1--35, 2018.

\bibitem{bay2021multi}
A.~Bay, Z.~Erkin, M.~Alishahi, and J.~Vos, ``Multi-party private set intersection protocols for practical applications.'' in \emph{SECRYPT}, 2021, pp. 515--522.

\bibitem{oringkstar24}
M.~Wu, T.~H. Yuen, and K.~Y. Chan, ``{O-Ring} and {K-Star}: Efficient multi-party private set intersection,'' in \emph{33rd USENIX Security Symposium (USENIX Security 24)}.\hskip 1em plus 0.5em minus 0.4em\relax Philadelphia, PA: USENIX Association, Aug. 2024, pp. 6489--6506.

\bibitem{jiang2022comprehensive}
X.~Jiang, X.~Zhou, and J.~Grossklags, ``Comprehensive analysis of privacy leakage in vertical federated learning during prediction.'' \emph{Proc. Priv. Enhancing Technol.}, vol. 2022, no.~2, pp. 263--281, 2022.

\bibitem{gao2025peafowl}
Y.~Gao, H.~Deng, Z.~Zhu, X.~Chen, Y.~Xie, P.~Duan, and P.~Chen, ``Peafowl: Private entity alignment in multi-party privacy-preserving machine learning,'' \emph{IEEE Transactions on Information Forensics and Security}, vol.~20, pp. 2706--2720, 2025.

\bibitem{liu2023iprivjoin}
Y.~Liu, B.~Zhang, Y.~Ma, Z.~Ma, and Z.~Wu, ``iprivjoin: An id-private data join framework for privacy-preserving machine learning,'' \emph{IEEE Transactions on Information Forensics and Security}, 2023.

\bibitem{chandran2021circuit}
N.~Chandran, D.~Gupta, and A.~Shah, ``Circuit-psi with linear complexity via relaxed batch opprf,'' \emph{Cryptology ePrint Archive}, 2021.

\bibitem{mp-spdz}
M.~Keller, ``{MP-SPDZ}: A versatile framework for multi-party computation,'' in \emph{Proceedings of the 2020 ACM SIGSAC Conference on Computer and Communications Security}, 2020.

\bibitem{inbar2018efficient}
R.~Inbar, E.~Omri, and B.~Pinkas, ``Efficient scalable multiparty private set-intersection via garbled bloom filters,'' in \emph{Security and Cryptography for Networks: 11th International Conference, SCN 2018, Amalfi, Italy, September 5--7, 2018, Proceedings 11}.\hskip 1em plus 0.5em minus 0.4em\relax Springer, 2018, pp. 235--252.

\bibitem{kavousi2021efficient}
A.~Kavousi, J.~Mohajeri, and M.~Salmasizadeh, ``Efficient scalable multi-party private set intersection using oblivious prf,'' in \emph{Security and Trust Management: 17th International Workshop, STM 2021, Darmstadt, Germany, October 8, 2021, Proceedings 17}.\hskip 1em plus 0.5em minus 0.4em\relax Springer, 2021, pp. 81--99.

\bibitem{bay2021practical}
A.~Bay, Z.~Erkin, J.-H. Hoepman, S.~Samardjiska, and J.~Vos, ``Practical multi-party private set intersection protocols,'' \emph{IEEE Transactions on Information Forensics and Security}, vol.~17, pp. 1--15, 2021.

\bibitem{wu2024ring}
M.~Wu, T.~H. Yuen, and K.~Y. Chan, ``O-ring and k-star: Efficient multi-party private set intersection,'' in \emph{33rd USENIX Security Symposium (USENIX Security 24)}.\hskip 1em plus 0.5em minus 0.4em\relax USENIX Association, 2024.

\bibitem{huang2012private}
Y.~Huang, D.~Evans, and J.~Katz, ``Private set intersection: Are garbled circuits better than custom protocols?'' in \emph{NDSS}, 2012.

\bibitem{rindal2021vole}
P.~Rindal and P.~Schoppmann, ``Vole-psi: fast oprf and circuit-psi from vector-ole,'' in \emph{Advances in Cryptology--EUROCRYPT 2021: 40th Annual International Conference on the Theory and Applications of Cryptographic Techniques, Zagreb, Croatia, October 17--21, 2021, Proceedings, Part II}.\hskip 1em plus 0.5em minus 0.4em\relax Springer, 2021, pp. 901--930.

\bibitem{garimella2021oblivious}
G.~Garimella, B.~Pinkas, M.~Rosulek, N.~Trieu, and A.~Yanai, ``Oblivious key-value stores and amplification for private set intersection,'' in \emph{Advances in Cryptology--CRYPTO 2021: 41st Annual International Cryptology Conference, CRYPTO 2021, Virtual Event, August 16--20, 2021, Proceedings, Part II 41}.\hskip 1em plus 0.5em minus 0.4em\relax Springer, 2021, pp. 395--425.

\bibitem{rindal2022blazing}
P.~Rindal and S.~Raghuraman, ``Blazing fast psi from improved okvs and subfield vole.'' \emph{IACR Cryptol. ePrint Arch.}, vol. 2022, p. 320, 2022.

\bibitem{ma2022secure}
J.~P. Ma and S.~S. Chow, ``Secure-computation-friendly private set intersection from oblivious compact graph evaluation,'' in \emph{Proceedings of the 2022 ACM on Asia Conference on Computer and Communications Security}, 2022, pp. 1086--1097.

\bibitem{li2021prism}
Y.~Li, D.~Ghosh, P.~Gupta, S.~Mehrotra, N.~Panwar, and S.~Sharma, ``Prism: Private verifiable set computation over multi-owner outsourced databases,'' in \emph{Proceedings of the 2021 International Conference on Management of Data}, 2021, pp. 1116--1128.

\bibitem{chandran2021efficient}
N.~Chandran, N.~Dasgupta, D.~Gupta, S.~L.~B. Obbattu, S.~Sekar, and A.~Shah, ``Efficient linear multiparty psi and extensions to circuit/quorum psi,'' in \emph{Proceedings of the 2021 ACM SIGSAC Conference on Computer and Communications Security}, 2021, pp. 1182--1204.

\bibitem{canetti2001universally}
R.~Canetti, ``Universally composable security: A new paradigm for cryptographic protocols,'' in \emph{Proceedings 42nd IEEE Symposium on Foundations of Computer Science}.\hskip 1em plus 0.5em minus 0.4em\relax IEEE, 2001, pp. 136--145.

\bibitem{pagh2001Cuckoo}
R.~Pagh and F.~F. Rodler, ``Cuckoo hashing,'' in \emph{Algorithms—ESA 2001: 9th Annual European Symposium {\AA}rhus, Denmark, August 28--31, 2001 Proceedings}.\hskip 1em plus 0.5em minus 0.4em\relax Springer, 2001, pp. 121--133.

\bibitem{freedman2005keyword}
M.~J. Freedman, Y.~Ishai, B.~Pinkas, and O.~Reingold, ``Keyword search and oblivious pseudorandom functions,'' in \emph{Theory of Cryptography: Second Theory of Cryptography Conference, TCC 2005, Cambridge, MA, USA, February 10-12, 2005. Proceedings 2}.\hskip 1em plus 0.5em minus 0.4em\relax Springer, 2005, pp. 303--324.

\bibitem{demmler2015aby}
D.~Demmler, T.~Schneider, and M.~Zohner, ``Aby-a framework for efficient mixed-protocol secure two-party computation.'' in \emph{NDSS}, 2015.

\bibitem{demmler2018pir}
D.~Demmler, P.~Rindal, M.~Rosulek, and N.~Trieu, ``Pir-psi: scaling private contact discovery,'' \emph{Cryptology ePrint Archive}, 2018.

\bibitem{mohassel2018aby3}
P.~Mohassel and P.~Rindal, ``Aby3: A mixed protocol framework for machine learning,'' in \emph{Proceedings of the 2018 ACM SIGSAC conference on computer and communications security}, 2018, pp. 35--52.

\bibitem{asuncion2007uci}
\BIBentryALTinterwordspacing
M.~Kelly, R.~Longjohn, and K.~Nottingham, ``The uci machine learning repository,'' 2023. [Online]. Available: \url{https://archive.ics.uci.edu}
\BIBentrySTDinterwordspacing

\bibitem{GiveMeSomeCredit}
\BIBentryALTinterwordspacing
W.~C. Credit~Fusion, ``Give me some credit,'' 2011. [Online]. Available: \url{https://kaggle.com/competitions/GiveMeSomeCredit}
\BIBentrySTDinterwordspacing

\bibitem{waksman1968permutation}
A.~Waksman, ``A permutation network,'' \emph{Journal of the ACM (JACM)}, vol.~15, no.~1, pp. 159--163, 1968.

\bibitem{ghosh2019communication}
S.~Ghosh and M.~Simkin, ``The communication complexity of threshold private set intersection,'' in \emph{Advances in Cryptology--CRYPTO 2019: 39th Annual International Cryptology Conference, Santa Barbara, CA, USA, August 18--22, 2019, Proceedings, Part II}.\hskip 1em plus 0.5em minus 0.4em\relax Springer, 2019, pp. 3--29.

\bibitem{badrinarayanan2021multi}
S.~Badrinarayanan, P.~Miao, S.~Raghuraman, and P.~Rindal, ``Multi-party threshold private set intersection with sublinear communication,'' in \emph{Public-Key Cryptography--PKC 2021: 24th IACR International Conference on Practice and Theory of Public Key Cryptography, Virtual Event, May 10--13, 2021, Proceedings, Part II}.\hskip 1em plus 0.5em minus 0.4em\relax Springer, 2021, pp. 349--379.

\end{thebibliography}

\section*{Appendix}

\renewcommand{\thesubsection}{\Alph{subsection}}
\renewcommand{\thesubsubsection}{\Alph{subsection}.\arabic{subsubsection}}

\subsection{Algorithm for optimized communication structure of cmPSI}
As is shown in algorithm~\ref{alg:optstruct}, it aims to construct an efficient communication structure for our cmPSI protocol, based on the parameters, i.e. the number of parties $n$, the number of designated root party $leader\_num$, the time for sending one OKVS table $t_s$, and the network delay ($t_l$).
The core idea is to maximize node utilization while minimizing communication delay.
Node is a data structure that has the num, children, and parent, where the `num' indicates the identifier of a party, the `parent' indicates the parties to whom this Node will send data, the `child' indicates the parties whose data this Node will receive.
The algorithm begins by computing $k$, which determines the number of children each node should have in the structure. Nodes are then grouped into sets of size $k$, with the last node in each group acting as the parent for the others. This grouping process is iterated until a single root node remains, which represents the leader. If the resulting root does not match the designated leader, a breadth-first search is used to locate and swap in the correct leader node.

\begin{algorithm}[H]
\caption{Optimized communication structure generation}
\setstretch{0.8}
\label{alg:optstruct}
\begin{algorithmic}[1]
\REQUIRE The number of parties $n$, the number of root party $leader\_num$, the time for sending one OKVS table $t_s$, the time of network delay $t_l$.
\ENSURE Root node $node$ of the communication structure.

\STATE $k = \lceil \frac{t_l}{t_s} \rceil + 1$, $c = n \bmod k$, $nodes,nodes2 = \emptyset$
  \FOR{$i = 0$ to $n - 1$}
    \STATE $nodes.append(Node(i))$
  \ENDFOR
  \WHILE{$n > 1$}
    \STATE $c = n \bmod k$ 
    \FOR{$i = 0$ to $n-k-c$ step $k$} 
        \FOR{$j = 0$ to $k - 1$} 
          \STATE $nodes[i + k - 1].add\_child(nodes[i + j])$
          \STATE $nodes[i + j].parent = nodes[i + k - 1]$
        \ENDFOR
        \STATE $nodes2.append(nodes[i+k-1])$
    \ENDFOR
    \FOR{$i = n-k-c$ to $n-1$} 
        \STATE $nodes[n-1].add\_child(nodes[i])$
        \STATE $nodes[i].parent = nodes[n-1]$
    \ENDFOR
    \STATE $nodes2.append(nodes[n-1])$
    \STATE $nodes = nodes2$, $nodes2 = \emptyset$
    \STATE $n=len(nodes)$
    \ENDWHILE
\STATE $node = nodes[0]$
\IF{$node.num == leader\_num$}
     \STATE Return
\ENDIF
\STATE   $queue = \text{deque}([node])$, $t\_node = \text{None}$
\WHILE{$queue \neq \emptyset$}
\STATE     $current = queue.popleft()$
    \IF{$current.num == leader\_num$}
       \STATE $t\_node = current$; Break
    \ENDIF
    \FOR{$child \in current.child$} 
       \STATE $queue.append(child)$
    \ENDFOR
\ENDWHILE
\IF{$t\_node \neq \text{None}$}
    \STATE $node.num, t\_node.num = t\_node.num, node.num$
\ENDIF
\STATE Return $node$
\end{algorithmic}
\label{alg:opt}
\end{algorithm}

\vfill

\end{document}